\theoremstyle{remark}
\newtheorem{theorem}{Theorem}
\newtheorem{proposition}{Proposition}
\newtheorem{corollary}{Corollary}
\newtheorem{lemma}{Lemma}
\newcommand{\argmax}[1]{\underset{#1}
{\operatorname{arg}\,\operatorname{max}}\;}
\def\mcal{\mathcal}
\def\mbb{\mathbb}
\def\vbar{\bar{v}}
\def\vubar{\underline{v}}
\title{\LARGE \bf
Characterizing the interplay between information and strength in Blotto games
}
\author{Keith Paarporn, Rahul Chandan, Mahnoosh Alizadeh, and Jason R. Marden 
\thanks{K. Paarporn ({\tt\small kpaarporn@ucsb.edu}), R. Chandan ({\tt\small rchandan@ucsb.edu}), M. Alizadeh ({\tt\small alizadeh@ucsb.edu}), and J. R. Marden ({\tt\small jrmarden@ece.ucsb.edu}) are with the Department of Electrical and Computer Engineering at the University of California, Santa Barbara, CA.} \thanks{This work is supported by UCOP Grant LFR-18-548175, ONR grant \#N00014-17-1-2060, and NSF grant \#ECCS-1638214.}
}
\begin{document}

\maketitle
\thispagestyle{empty}
\pagestyle{empty}

\begin{abstract}

In this paper, we investigate informational asymmetries in the Colonel Blotto game, a game-theoretic model of competitive resource allocation between two players over a set of battlefields. The battlefield valuations are subject to randomness. One of the two players knows the valuations with certainty. The other knows only a distribution on the battlefield realizations. However, the informed player has fewer resources to allocate. We characterize unique equilibrium payoffs in a two battlefield setup of the Colonel Blotto game. We then focus on a three battlefield setup in the General Lotto game, a popular variant of the Colonel Blotto game. We characterize the unique equilibrium payoffs and mixed equilibrium strategies. We quantify the value of information - the difference in equilibrium payoff between the asymmetric information game and complete information game. We find information strictly improves the informed player's performance guarantee. However, the magnitude of improvement varies with the informed player's strength as well as the game parameters. Our analysis   highlights the interplay between strength and information in adversarial environments.

\end{abstract}
\section{Introduction}\label{sec:intro}

In adversarial interactions, informational asymmetries may provide an advantage to one competitor over the other. Adversarial contests appear in a wide range of applications, such as political campaigns \cite{Behnezhad_2018,Thomas_2018}, security of cyber-physical systems \cite{Ferdowsi_2017}, and competitive advertising \cite{Fazeli_2017}. Rigorous studies on the interplay between asymmetric information and strategic decision-making in adversarial settings has received attention in the recent literature \cite{Li_2017,Kartik_2019}. Though these works analyze general zero-sum game settings,  another popular framework for analyzing such contests is the Colonel Blotto game. Two players strategically allocate their limited resources over a finite set of battlefields.  A player secures a battlefield if it  has allocated more resources to it than the opponent. Each player aims to secure as many battlefields as possible. The goal of the present paper is to quantify the performance improvements one player in the Blotto game may experience when it possesses system-level information while its opponent does not.

Though simply posed, the Colonel Blotto game features highly complex strategies.  The established literature on the Colonel Blotto game and its variants primarily focuses on characterizing mixed-strategy Nash equilibria, due to non-existence of pure equilibria in most cases of interest. First introduced by Borel in 1921 \cite{Borel}, researchers have incrementally contributed to this body of work over the last one hundred years \cite{Gross_1950}, \cite{Roberson_2006}. In 1950, Gross and Wagner \cite{Gross_1950} provided an equilibrium solution of the two battlefield case with asymmetric values and resources, as well as for $n$ homogeneous battlefields and symmetric resources. In 2006, the work of Roberson \cite{Roberson_2006} generalized the solution to $n$ battlefields and asymmetric forces by leveraging the solutions of all-pay auctions and the theory of copulas \cite{Sklar_1973}. 


There recently has been renewed interest in the Colonel Blotto game, along with its many variants and extensions \cite{Hart_2008,Macdonell_2015,Schwartz_2014,Kovenock_2015,Ferdowsi_2018,Thomas_2018}. Notably, results for heterogeneous battlefield valuations have  appeared in Colonel Blotto as well as the General Lotto variant, which admits a more tractable analysis \cite{Schwartz_2014,Kovenock_2015}.   Blotto games have also received attention in engineering application areas such as the security of cyber-physical systems \cite{Ferdowsi_2017} and the formation of large-scale networks \cite{Shahrivar_2014,Guan_2019}.  The vast majority of these studies assume the players have complete information about the opposing player's resource budget and the values of each battlefield.


Few works have considered the Blotto or Lotto game with incomplete information. In \cite{Adamo_2009}, the authors study a Blotto game in which players have incomplete information about the other's resource budgets. In \cite{Kovenock_2011}, the players are subject to incomplete information about the battlefield valuations. In both of these works, all players are equally uninformed about the parameters, and hence symmetric Bayes-Nash equilibria are provided. To the best of our knowledge, one-shot Blotto games where the players possess asymmetric information has not received attention in the literature. 


 In this paper, we formulate a Bayesian game framework in which one player is completely informed and the other does not receive any side information about the battlefield valuations (Section \ref{sec:model}).  Our main contributions characterize unique equilibrium payoffs and strategies in representative scenarios of this framework. We first analyze the  Colonel Blotto game with two battlefields (Section \ref{sec:results_CB}).  With three battlefields, we are able to solve for unique mixed-strategy equilibria in a representative General Lotto game under the same informational asymmetry (Section \ref{sec:results_GL}). We do this by leveraging established results on all-pay auctions with asymmetric information  \cite{Siegel_2014} (Section \ref{sec:APA}). We can then quantify the difference between the equilibrium payoff in this game and the scenario where both players are uninformed.

 
These characterizations allow us to determine conditions under which the informed player has an advantage in the Lotto game, despite having fewer resources to allocate. As an illustrative analysis, we consider a scenario where both players are uninformed and the option of purchasing information with a fraction of its budget is available only to the weaker player. We quantify a measure of informational value determined by the largest proportion of  resources the player is willing to give up in exchange for information (Section \ref{sec:results_GL}).


%
\begin{figure*}[t]
	\centering
	\begin{subfigure}{.32\textwidth}
		\centering
		\includegraphics[scale=.95]{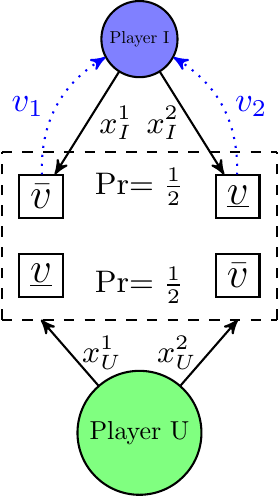}
		\caption{}
		\label{fig:blotto_diagram}
	\end{subfigure}  
	\begin{subfigure}{.32\textwidth}
		\centering
		\includegraphics[scale=.4]{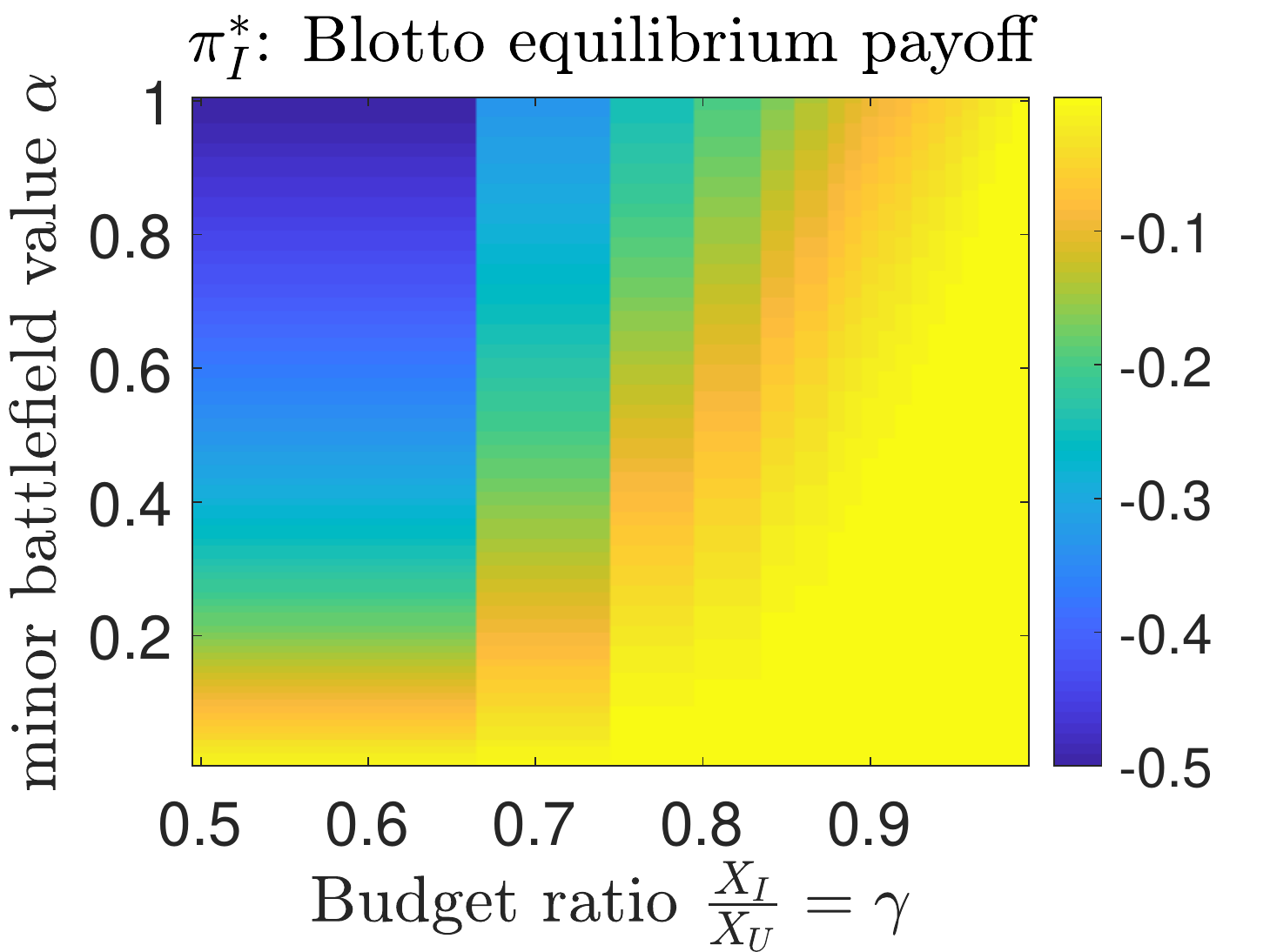}
		\caption{}
		\label{fig:blotto_pi_I}
	\end{subfigure}  
	\begin{subfigure}{.32\textwidth}
		\centering
		\includegraphics[scale=.4]{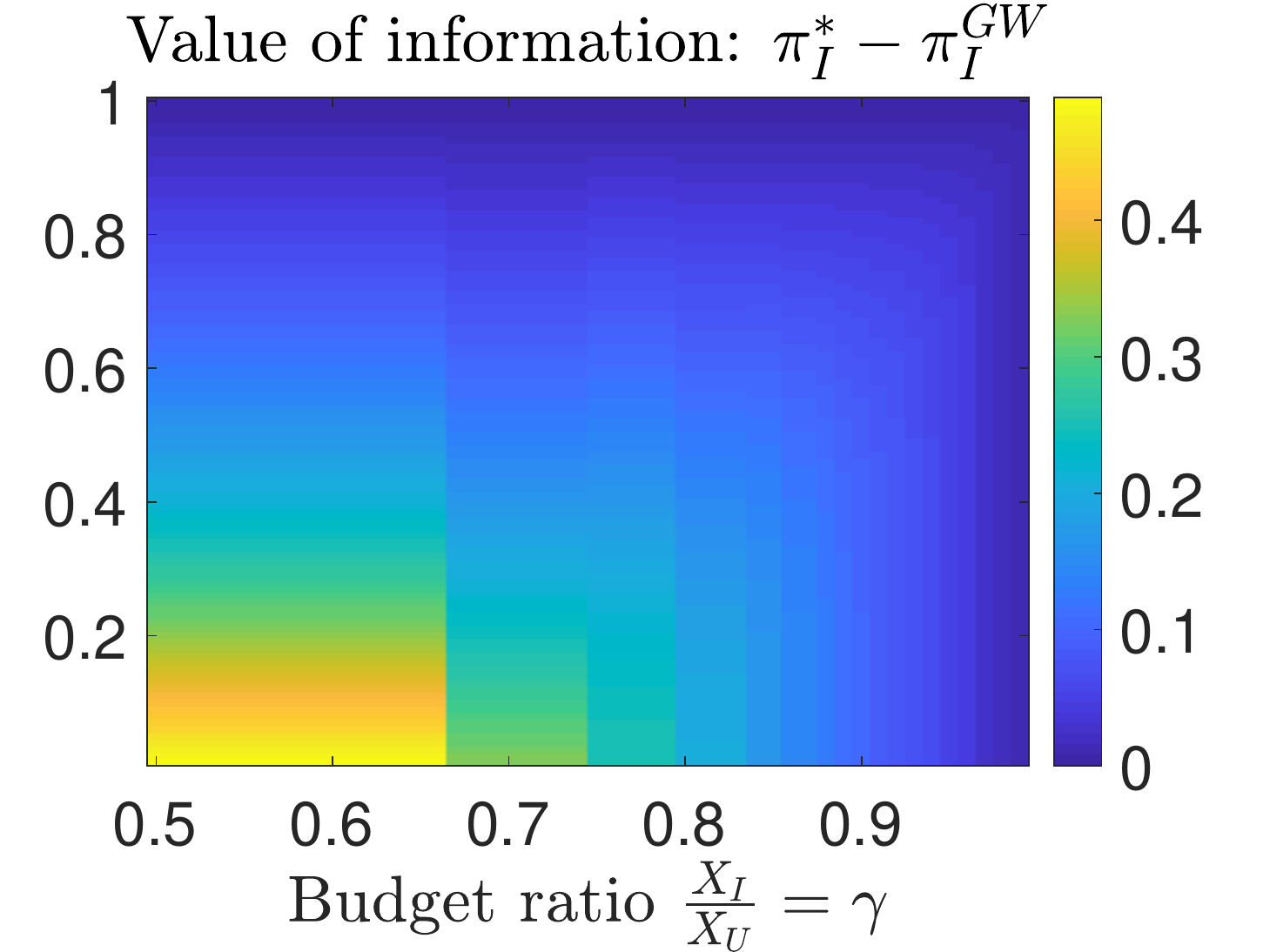}
		\caption{}
		\label{fig:blotto_VoI}
	\end{subfigure}
	\caption{\small (a) A diagram of the Colonel Blotto game with two battlefields and asymmetric budgets and information. There are two possible battlefield valuation sets, each realized with probability $\frac{1}{2}$. Player I has fewer resources than player U, but knows the true realization. (b) The equilibrium payoff \eqref{eq:blotto_payoff_midrange} to the informed player in the Blotto game $\text{CB}(X_I,X_U,V,\frac{1}{2})$, where $\vbar=1$ and $\vubar=\alpha\in(0,1)$. It is negative when $\gamma \in (\frac{1}{2},1)$ and $\alpha \in (0,1)$. (c) The value of information, $\pi_I^* - \pi_I^{\text{GW}}$, is non-negative. Information offers the most improvement for low budget ($\gamma$ near $\frac{1}{2}$) and when there is higher priority in the diagonal battlefields ($\alpha$ low).  } 
	\label{fig:blotto_plots}
\end{figure*}
%

\section{Model}\label{sec:prelim}

Two players, $I$ and $U$, allocate their non-atomic forces across $n$ battlefields simultaneously. In each battlefield $j\in\{1,\ldots,n\}$, the player that sends more forces wins the battlefield and receives payoff $v^j$, while the losing player receives $-v^j$. In the case of a tie on a battlefield, both players receive zero payoff. The utility to each player is the sum of payoffs across all battlefields.   The players have a budget on their forces, $X_I, X_U > 0$. We assume $X_I < X_U$. The budgets and battlefield valuations $\bm{v}:=\{v^j\}_{j=1}^n$ are common knowledge. For $z\in\{I,U\}$, a pure strategy is a non-negative vector $\bm{x}_z = (x^1_z,\ldots,x^n_z) \in \mbb{R}^n_+$. We call $x^z$ an \emph{allocation}. The payoff to player $z$  is
\begin{equation}\label{eq:u}
	u_z(\bm{x}_z,\bm{x}_{-z};\bm{v}) := \sum_{j=1}^n v^j \text{sgn}(x^j_z - x^j_{-z}) ,
\end{equation}
where we follow the convention $\text{sgn}(0) = 0$. This defines a zero-sum game since the payoff to player $-z$ is the negative of the above.  A mixed strategy for player $z$ is an $n$-variate distribution function $F_z$ on $\mbb{R}_+^n$. That is, an allocation $\bm{x}_z$ is drawn from the distribution $F_z$.  A distribution $F_z$ has $n$ univariate marginal  (cumulative) distributions $\{F_z^j\}_{j=1}^n$, specifying the allocation distributions to each battlefield. Extending the definition of \eqref{eq:u} to mixed strategies, the expected payoff for each player can be expressed as
\begin{equation}\label{eq:lotto_payoff}
	\begin{aligned}
		u_z(F_z,F_{-z};\bm{v}) &= \int_{\mbb{R}_+^n} \int_{\mbb{R}_+^n}\sum_{j=1}^n v^j \text{sgn}(x^j_z - x^j_{-z}) dF_z dF_{-z} \\
		&=\sum_{j=1}^n v^j  \int_0^\infty (2F_{-z}^j -1)dF_z^j .
	\end{aligned}
\end{equation}
\vspace{-5mm}
\subsection{The Colonel Blotto game}
In the Colonel Blotto game, a feasible allocation $\bm{x}_z$ for player $z \in \{I,U\}$  lies in the set
\begin{equation}\label{eq:blotto_constraint}
	B(X_z) := \Bigg\{ \bm{x}_z \in \mbb{R}_+^n : \sum_{j=1}^n x^j_z = X_z \Bigg\} . \tag{BC}
\end{equation}
Furthermore, the support of a mixed strategy $F_z$ is contained in $B(X_z)$. Thus, an allocation $\bm{x}_z$ drawn from $F_z$ belongs to $B(X_z)$ with probability one. We denote the set of all such feasible mixed strategies with $\mcal{B}(X_z)$. We refer to the Colonel Blotto game with $\text{CB}(X_I,X_U,\bm{v})$. 

\subsection{The General Lotto game}

The General Lotto game relaxes the support constraint in the Colonel Blotto game, requiring each player's allocation budget to be met only in expectation with respect to their marginal distributions. That is,
\begin{equation}\label{eq:lotto_constraint}
	\mcal{L}(X_z) := \Bigg\{ F_z : \sum_{j=1}^n \mbb{E}_{F_z^j}[x^j_z] = X_z \Bigg\} \tag{LC}
\end{equation}
is the set of such feasible mixed strategies. Note that both the budget constraint \eqref{eq:lotto_constraint} and expected payoff \eqref{eq:lotto_payoff} only depend on the univariate marginal distributions and not the joint $n$-variate distribution $F_z$. Hence, a player's choice amounts to selecting $n$ independent univariate marginals that satisfy \eqref{eq:lotto_constraint}. We specify General Lotto game with $\text{GL}(X_I,X_U,\bm{v})$.

\section{Blotto and Lotto games with asymmetric information}\label{sec:model}

Using a standard Bayesian games framework, we wish to model a situation in which the battlefield valuations are subject to randomness. Suppose there are $m$ possible sets of battlefield valuations, labeled by the states  $\Omega = \{\omega_1,\ldots,\omega_m\}$.  The state $\omega_i$ is realized with probability $p_i > 0$, and the probability vector $\bm{p}\in\Delta(\Omega)$ satisfying $\sum_{i=1}^m p_i = 1$ is common knowledge to the players.  Given state $\omega_i$ is realized, battlefield $j$ is worth $v_i^j > 0$, and we denote $V$ as the $m \times n$ matrix with elements $v_i^j$.  

We consider the setting where player $I$ observes the true state realization, and player $U$ does not receive any side information about the realization\footnote{A more general framework of asymmetric information may be formulated with Bayesian games. That is, arbitrary partial information structures may be assigned to the players. Since our focus is on one informed and one uninformed player, we leave such generalizations to future work.}. We call $U$ the ``uninformed" player. Specifically, player $I$ has $m$ distinct types $T_I := \{t_1,\ldots,t_m\}$, receiving type $t_i$ whenever state $\omega_i$ is realized. That is, $I$ knows the realization with certainty. We refer to $T_I$ as its type space. Player $U$ has a single type, and hence infers the realization (and $I$'s type) according to the common prior $\bm{p}$. The information structure is known to both players. 

A strategy for player $I$ consists of $n$-variate distributions $\{F_I(t_i)\}_{i=1}^m$, one for each type $t_i$, $i=1,\ldots,m$. We refer to the collection $F_I = \{F_I(t_i)\}_{i=1}^m$ as $I$'s strategy. Player $U$'s strategy $F_U$ is a single $n$-variate distribution. For the Lotto game, $F_I(t_i) \in \mcal{L}(X_I)$ for each $t_1,\ldots,t_m$, and $F_U \in \mcal{L}(X_U)$. For the Blotto game, they belong to $\mcal{B}(X_I)$ and $\mcal{B}(X_U)$, respectively. We denote $\{F_I^j(t_I)\}_{j=1}^n$ as the univariate marginals of $F_I(t_I)$ for each $t_I \in T_I$, and $F_U^j$ the univariate marginals of $F_U$. All of these components specify a Lotto or Blotto game with incomplete information, which we denote as $\text{GL}(X_I,X_U,V,\bm{p})$ and $\text{CB}(X_I,X_U,V,\bm{p})$, respectively.  The expected utility of player $U$ is
\begin{equation}\label{eq:payoff_U}
	\pi_U(F_U,F_I) := \sum_{i=1}^m p_i u_U(F_U,F_I(t_i),\omega_i)
\end{equation}
where with some abuse of notation, we use $\omega_i$ to refer to the battlefield valuation set in state $i$. The ex-interim utility for player $I$, given type $t_i$ is the expected payoff
\begin{equation}\label{eq:interim_I}
	\pi_I(F_I(t_i),F_U | t_i) := u_I(F_I(t_i),F_U,\omega_i) .
\end{equation}
A Bayes-Nash equilibrium (BNE) is a strategy profile $(F_I^*,F_U^*)$ satisfying 
\begin{equation}
	\begin{aligned}
		F_I^*(t_i) &\in \argmax{F_I(t_i)} \pi_I(F_I(t_i),F_U^* | t_i), \forall i=1,\ldots,m \\
		\text{and } F_U^* &\in \argmax{F_U} \pi_U(F_U,F_I^*).
	\end{aligned}
\end{equation}
An equivalent BNE condition is that $(F_I^*,F_U^*)$ satisfies the best-response correspondences in ex-ante payoffs, given that each $p_i > 0$ \cite{Vega-Redondo_2003}. That is, $F_U^* \in \argmax{F_U} \pi_U(F_U,F_I^*)$ and $F_I^* \in \argmax{\{F_I(t_i)\}_{i=1}^m} \pi_I(F_I,F_U^*)$, where $I$'s ex-ante payoff is
\begin{equation}\label{eq:I_exante}
	\pi_I(F_I,F_U) := \sum_{i=1}^m p_i \pi_I(F_I(t_i),F_U | t_i).
\end{equation}  
We note that since the underlying  complete information game is zero-sum, the games with asymmetric information are also zero-sum with respect to ex-ante utilities \eqref{eq:payoff_U}, \eqref{eq:I_exante}. This allows us to speak of a unique equilibrium ex-ante payoff $\pi_I^* = -\pi_U^*$ of the Blotto and Lotto games with asymmetric information.

%


\section{Colonel Blotto results}\label{sec:results_CB}
In the following structural result, we characterize sufficient conditions on the budget ratio $\gamma := \frac{X_I}{X_U}$ such that the stronger player $U$ is guaranteed a positive equilibrium payoff regardless of the information asymmetry.
\begin{proposition}
	Assume $\frac{X_I}{X_U} < 1$. Consider the game $\text{CB}(X_I,X_U,V,\bm{p})$ where $V \in m\times n$, i.e. there are $n$ battlefields and $m$ possible state realizations. A sufficient condition for which $\pi_U^* \geq 0$ is
	\begin{equation}
		\begin{cases}
			\frac{X_I}{X_U} < \frac{2}{n}, \quad &\text{if } n \text{ is even} \\
			\frac{X_I}{X_U} < \frac{2}{n+1}, \quad &\text{if } n \text{ is odd} \\
		\end{cases}
	\end{equation}
\end{proposition}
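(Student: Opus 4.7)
The plan is to exhibit a single state-independent strategy for the uninformed player $U$ which guarantees $\pi_U(F_U,F_I) \geq 0$ against every strategy $F_I$ of the informed player and pointwise in the state $\omega_i$. Since the asymmetric-information game is zero-sum in ex-ante utilities, producing such a strategy immediately yields $\pi_U^* \geq 0$.

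Let $k := n/2$ if $n$ is even and $k := (n+1)/2$ if $n$ is odd. The candidate strategy for player $U$ is: draw a subset $S \subseteq \{1,\ldots,n\}$ uniformly at random among the $\binom{n}{k}$ subsets of size $k$ and allocate $X_U/k$ to each battlefield in $S$ and $0$ elsewhere. This trivially satisfies $\sum_{j=1}^n x^j_U = X_U$, so the induced distribution lies in $\mcal{B}(X_U)$. By hypothesis $X_U/k$ equals either $2X_U/n$ or $2X_U/(n+1)$, both strictly larger than $X_I$.

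The key observation is that for any pure allocation $\bm{x}_I \in B(X_I)$ and any battlefield $j$, $x^j_I \leq X_I < X_U/k$. Therefore, whenever $j \in S$, player $U$ strictly outbids $I$ on battlefield $j$, so $\text{sgn}(x^j_U - x^j_I) = +1$; when $j \notin S$, the same quantity is either $-1$ or $0$. Taking expectation over $S$ with $\Pr[j \in S] = k/n$ gives
\begin{equation*}
\mbb{E}_S\!\left[\text{sgn}(x^j_U - x^j_I)\right] \geq \frac{k}{n} - \frac{n-k}{n} = \frac{2k-n}{n},
\end{equation*}
which equals $0$ when $n$ is even and $1/n$ when $n$ is odd. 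Summing over $j$ with positive weights $v^j$ yields $\mbb{E}_S[u_U(\bm{x}_U,\bm{x}_I;\bm{v})] \geq 0$ for every valuation vector $\bm{v}$ realizable in any state.

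By linearity, this bound extends to mixed strategies $F_I$ and then to the convex combination over states defining $\pi_U$ in \eqref{eq:payoff_U}, so $\pi_U(F_U,F_I) \geq 0$ for all $F_I$, giving $\pi_U^* \geq 0$ as claimed. I do not expect a technical obstacle beyond this bookkeeping; the construction is an oblivious $k$-of-$n$ uniform cover calibrated so that $U$'s per-battlefield bid strictly dominates $I$'s entire budget, which renders the informational asymmetry irrelevant to the guarantee.
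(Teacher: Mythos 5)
Your proof is correct, but it takes a genuinely different route from the paper's. The paper's argument is deterministic: it sorts the battlefields by prior expected value $E_j=\sum_i p_i v_i^j$ and has $U$ place a bid exceeding anything $I$ can afford (under the stated budget condition) on each of the top $n/2$ (resp.\ $(n+1)/2$) battlefields, so $U$ surely wins the more valuable half and can lose at most the less valuable half, giving a security value $\geq 0$. You instead use an oblivious randomized cover: a uniformly random size-$k$ subset with equal bids $X_U/k > X_I$, and conclude per battlefield via $\Pr[j\in S]\geq \Pr[j\notin S]$, i.e.\ $\mbb{E}_S[\text{sgn}(x_U^j-x_I^j)]\geq (2k-n)/n\geq 0$, then sum with the positive weights $v^j$ and pass to mixed strategies and the prior by linearity. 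What your approach buys is that $U$'s strategy does not depend on $V$ or $\bm{p}$ at all (no ordering of battlefields is needed), and for odd $n$ it even yields a strictly positive bound $\frac{1}{n}\sum_j v^j$ pointwise in the state; what the paper's approach buys is a simpler pure strategy and a sharper guarantee (at least the gap between the top-half and bottom-half expected values). Both correctly reduce the Bayesian claim to a security-level argument that is valid because the ex-ante game is zero-sum, so either proof establishes $\pi_U^*\geq 0$.
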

\begin{proof}
	Denote $E_j = \sum_{i=1}^m p_i v_i^j$ as the prior expected value of battlefield $j$. Without loss of generality, assume the $n$ battlefields are ordered according to $E_1 \geq E_2 \geq \cdots \geq E_n$. For $n$ even, consider player U's deterministic strategy that places $2X_I/n$ to the first $n/2$ battlefields, and the remaining resources allocated arbitrarily. This guarantees U attains at least half of the total available value. That is, U's security value for this strategy is at least zero. Similar reasoning applies to the case of $n$ odd.
\end{proof}
For two battlefields, the weaker resource player $I$ never wins the game, despite having better information. The following analysis focuses on the two battlefield case of $\text{CB}(X_I,X_U,V,\bm{p})$ under the uniform prior $\bm{p}=\frac{1}{2}\mbb{1}_2$. Although $I$ never wins, we quantify how information improves its payoff relative to when both players are uninformed. A solution for a general prior $\bm{p}=[p,1-p]$ remains a challenge in this analysis, due to the complexity of finding suitable equilibrium distributions. Hence for the rest of this section, we assume $p=\frac{1}{2}$. 
\subsection{Two battlefields case}
If $\frac{X_I}{X_U} < \frac{1}{2}$, player $U$ can secure both battlefields regardless of what $I$ does. We thus restrict our attention to the regime $\frac{X_I}{X_U} \in (\frac{1}{2},1)$.
We state the main result of this section, which characterizes the equilibrium payoff to $I$. Here, we assume a symmetric structure on the battlefields $V = \frac{1}{\vbar+\vubar}\begin{bmatrix} \vbar & \vubar \\ \vubar & \vbar \end{bmatrix}$ for $\vbar > \vubar > 0$. Figure \ref{fig:blotto_diagram} illustrates the setup of this game. 
\begin{theorem}\label{blotto_phalf}
	Let $\frac{X_I}{X_U} \in (\frac{1}{2},1)$, and assume $p=\frac{1}{2}$. Define $q = \lfloor \frac{X_U}{X_U - X_I} \rfloor$. Then the ex-ante equilibrium payoff to the informed player $I$ is
	\begin{equation}\label{eq:blotto_payoff_midrange}
		\pi_I^* = \begin{cases}
			-\left(2\sum_{k=0}^{\frac{q-1}{2}} \left(\vbar/\vubar \right)^k -1 \right)^{-1} & \text{if } q \text{ odd} \\
			-\frac{\vubar}{\vbar+\vubar}\left(\sum_{k=0}^{\frac{q}{2}-1} \left(\vbar/\vubar \right)^k\right)^{-1} & \text{if } q \text{ even}
		\end{cases}
	\end{equation}
\end{theorem}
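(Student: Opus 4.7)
My plan is to exploit the symmetry of the problem (the uniform prior $p=\tfrac{1}{2}$ and the swap-symmetric matrix $V$) to reduce the equilibrium search to univariate distributions, extract a functional equation on $U$'s equilibrium density from $I$'s indifference conditions, and then construct an explicit piecewise-constant density whose normalization produces the geometric sums in \eqref{eq:blotto_payoff_midrange}.

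\textbf{Symmetry reduction.} I first restrict attention to equilibria in which $F_I^*(t_2)$ is the battlefield-swap reflection of $F_I^*(t_1)$ and $F_U^*$ is invariant under the same swap. Because $n=2$, the Blotto constraint \eqref{eq:blotto_constraint} collapses pure allocations onto a one-dimensional segment, so each player's strategy is captured by a univariate distribution on the allocation to battlefield~1; I denote these by $F_{I,i}^1$ (for $i = 1,2$) and $F_U^1$. Swap symmetry gives $F_U^1(y) = 1 - F_U^1(X_U - y)$ and $F_{I,2}^1(x) = 1 - F_{I,1}^1(X_I - x)$.

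\textbf{Functional equation.} Writing the ex-interim payoff of $I$ of type $t_1$ as a function of its allocation $x$ to battlefield~1, with $\Delta := X_U - X_I$ and assuming $F_U^1$ has no atoms, gives
\begin{equation*}
    \pi_1(x) = \frac{2\vbar}{\vbar+\vubar}\, F_U^1(x) - \frac{2\vubar}{\vbar+\vubar}\, F_U^1(x+\Delta) - \frac{\vbar-\vubar}{\vbar+\vubar}.
\end{equation*}
Requiring $\pi_1$ to be constant on the support of $F_{I,1}^*$ yields the step relation $f_U^1(x+\Delta) = r\, f_U^1(x)$, with $r := \vbar/\vubar$: the density of $F_U^*$ must grow by a factor of $r$ over every shift by $\Delta$. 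The analogous type-$t_2$ condition follows automatically from swap symmetry. A dual computation using $U$'s indifference on its support furnishes parallel step relations for $f_{I,i}^1$, which pin down $F_I^*$ once $F_U^*$ is known.

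\textbf{Construction and payoff.} The step relation combined with mirror symmetry $f_U^1(y) = f_U^1(X_U - y)$ forces $f_U^1$ to be piecewise constant on a partition of $[0, X_U]$ into windows of width $\Delta$, with densities in geometric progression of common ratio $r$. The parity of $q = \lfloor X_U/\Delta \rfloor$ controls the alignment with $X_U/2$: for $q$ odd a central window straddles $X_U/2$; for $q$ even $X_U/2$ lies at a window boundary. Normalizing $\int_0^{X_U} f_U^1 \, dy = 1$ in each case introduces exactly one of the geometric sums $\sum_{k=0}^{(q-1)/2} r^k$ or $\sum_{k=0}^{q/2-1} r^k$. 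Substituting the resulting $F_U^*$ into $\pi_1$ at any point of the support of $F_{I,1}^*$ (which one identifies as the top-most window contained in $[0, X_I]$) gives the two branches of \eqref{eq:blotto_payoff_midrange}; symmetry furnishes $F_{I,2}^*$.

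\textbf{Main obstacle.} The bulk of the work is combinatorial: identifying the correct window partition and the exact support $S_{I,1}^* \subset [0, X_I]$ in each parity case, and verifying \emph{global} best responses. Specifically, one must confirm that $\pi_1(x)$ is maximized (not merely constant) on $S_{I,1}^*$, and that $U$'s expected payoff against the constructed $F_I^*$ is maximized on the support of $F_U^*$. This requires a careful sign analysis of $\pi_1$ and its $U$-counterpart across every window; once the supports are correctly located, the payoff formula follows directly from the normalization constants.
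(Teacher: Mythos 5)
Your overall strategy---reduce to univariate distributions on the battlefield-1 allocation, impose swap symmetry, and build an equalizer strategy for $U$ whose mass grows geometrically with ratio $\vbar/\vubar$ across shifts of $\Delta=X_U-X_I$---is in the right spirit, and indeed the paper's equilibrium has exactly this geometric structure. However, the paper's construction is purely atomic (delta masses at the $q$ points $e+(k-1)\Delta$ with an offset $e$ chosen in $(r,\Delta)$, where $r=X_U-q\Delta$, and V-shaped weights proportional to $c^{|k-\frac{q+1}{2}|}$ with $c=\vbar/\vubar$), and that offset/remainder structure is precisely what your sketch omits. Your proposal takes $f_U$ to be a density that is piecewise constant on windows of width $\Delta$ tiling $[0,X_U]$ and normalizes $\int_0^{X_U}f_U=1$. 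This cannot produce \eqref{eq:blotto_payoff_midrange}: the theorem's value depends on $\gamma$ only through $q$ and its parity, whereas a $U$-strategy with full support on $[0,X_U]$ yields a value that varies continuously with $\gamma$ inside each interval where $q$ is fixed. A quick sanity check is the limit $\vbar\downarrow\vubar$: your construction degenerates to (near-)uniform on $[0,X_U]$, against which every allocation of $I$ earns $F_U(y)-F_U(y+\Delta)\approx-(1-\gamma)$, while the correct value is $-1/q\le-(1-\gamma)$ with equality only when $X_U/\Delta$ is an integer. So whenever $r>0$ the constructed $F_U$ is not optimal for $U$ (the global best-response verification you defer would fail), and the "normalization produces the geometric sums" step does not go through because both the normalizing constant and the evaluation points $F_U(y^*),F_U(y^*+\Delta)$ pick up $r$-dependent contributions. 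An equalizer for $U$ must leave gaps (or, as in the paper, place atoms offset by $e\in(r,\Delta)$) so that no mass is wasted beyond $q\Delta$ and the boundary at $X_I$ is exploited.

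There is also an internal inconsistency in the structural claim: a density satisfying $f_U(x+\Delta)=r\,f_U(x)$ with $r>1$ across \emph{all} windows cannot be mirror symmetric about $X_U/2$. The indifference condition only constrains $f_U$ on the supports of $F_{I}(t_1)$ and $F_{I}(t_2)$, so the correct profile is V-shaped (geometrically decreasing on the lower half, increasing on the upper half, exactly as the paper's atom weights), and type $t_1$ mixes only on the upper portion of $[0,X_I]$ --- in the paper, roughly $(e+(\frac{q-1}{2}-1)\Delta, X_I)$, which is about half of $[0,X_I]$, not just the top-most window. Locating these supports, handling the remainder $r$, and then verifying that each type's payoff is \emph{maximized} (increasing up to, then constant on, its support region) is the substantive content of the paper's proof; as written, your sketch asserts the structure it would need to prove and, with the stated normalization over all of $[0,X_U]$, would compute a payoff that contradicts the theorem whenever $X_U/(X_U-X_I)$ is not an integer.
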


We provide the proof in the Appendix, which details a set of equilibrium mixed strategies. 
When both players are uninformed, i.e. both have a single type, we have a complete information game where the two battlefield valuations are their expected values, $\frac{1}{2}$. The equilibrium payoff is then given by an application of Gross and Wagner's solution  \cite{Gross_1950}, which yields $\pi_I^{\text{GW}} := -\frac{1}{q}.$ The following result verifies the equilibrium payoff improves when $I$ obtains information.
\begin{corollary}
	We have $\pi_I^* > \pi_I^{\text{GW}}$. That is, information strictly improves the equilibrium payoff for $I$.
\end{corollary}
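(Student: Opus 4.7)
The plan is to compare the closed-form expression from Theorem~\ref{blotto_phalf} directly with $\pi_I^{\text{GW}} = -1/q$, reducing the question to an algebraic inequality in the single parameter $r := \vbar/\vubar > 1$. First I note that since $\gamma = X_I/X_U \in (\tfrac{1}{2}, 1)$, we have $\frac{X_U}{X_U-X_I} = \frac{1}{1-\gamma} > 2$, so $q = \lfloor 1/(1-\gamma)\rfloor \geq 2$. This rules out the degenerate case $q=1$ (where the odd-case sum collapses to one term and the inequality would fail to be strict).

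For the odd case ($q \geq 3$), the inequality $\pi_I^* > -1/q$ is equivalent to
\begin{equation*}
2\sum_{k=0}^{(q-1)/2} r^k - 1 \;>\; q.
\end{equation*}
Since $r > 1$ and each of the $(q-1)/2 \geq 1$ indices $k \in \{1,\dots,(q-1)/2\}$ contributes $r^k > 1$, while the $k=0$ term contributes exactly $1$, the sum strictly exceeds $(q+1)/2$, giving $2\cdot (q+1)/2 - 1 = q$ with strict inequality.

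For the even case ($q \geq 2$), using $\vubar/(\vbar+\vubar) = 1/(1+r)$, the inequality $\pi_I^* > -1/q$ reduces to
\begin{equation*}
(1+r)\sum_{k=0}^{q/2-1} r^k \;>\; q.
\end{equation*}
Expanding the left-hand side gives $1 + 2r + 2r^2 + \cdots + 2r^{q/2-1} + r^{q/2}$, which equals exactly $q$ at $r=1$ and is strictly increasing in $r$ on $(0,\infty)$ (since every monomial coefficient is positive). Hence for $r>1$ the strict inequality holds.

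The argument is entirely computational once one recognizes that the two cases are encoded by the single parameter $r>1$ and that the range $\gamma \in (\tfrac12,1)$ guarantees $q \geq 2$; there is no substantive obstacle beyond verifying the two short monotonicity estimates above.
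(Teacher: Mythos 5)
Your proof is correct and follows essentially the same route as the paper: a direct algebraic comparison of the closed-form payoffs in \eqref{eq:blotto_payoff_midrange} with $-1/q$, reduced to elementary estimates on the geometric sums in $r=\vbar/\vubar>1$. You simply carry out both parity cases explicitly (the paper does the even case and states the odd case is similar) and add the correct observation that $\gamma\in(\tfrac12,1)$ forces $q\geq 2$, which guarantees strictness.
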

\begin{proof}
	Consider the $q$ even case. Since $\frac{\vubar}{\vbar+\vubar} < \frac{1}{2}$ and $\frac{\vbar}{\vubar} > 1$ for $\vbar > \vubar$, from \eqref{eq:blotto_payoff_midrange} we get $-\frac{\vubar}{\vbar+\vubar}\left(\sum_{k=0}^{\frac{q}{2}-1} \left(\vbar/\vubar \right)^k\right)^{-1} > -\frac{1}{q}$. Similar arguments apply in the $q$ odd case.
\end{proof}

We may quantify a value of information in this setting as the payoff difference $\pi_I^* - \pi_I^{\text{GW}} $. Setting $\vbar=1$ and $\vubar = \alpha \in (0,1)$, we plot the equilibrium values $\pi_I^*$ in Figure \ref{fig:blotto_plots} , as well as the value of information. In the two battlefield Blotto game, an informed player still cannot gain an advantage, as $\pi_I^* < 0$ for all parameters $(\alpha,\gamma) \in (0,1)\times(0,1)$. Information offers the most payoff improvement  when $I$'s budget is low and the minor battlefield $\alpha$ is worth less (Fig. \ref{fig:blotto_VoI}).

In the next section, we consider the three battlefield case in the General Lotto game. In general, equilibrium solutions for $n \geq 3$ heterogeneous battlefields have not been characterized in the Colonel Blotto game, due to the complexity of finding suitable copulas for the marginal distributions \cite{Thomas_2018}. The Lotto constraint relaxation allows for analytic tractability in cases of more than two heterogeneous battlefields \cite{Kovenock_2015}.

\section{Results on General Lotto}\label{sec:results_GL}

%
\begin{figure*}[t]
	\centering
	\hspace{-3mm}
	\begin{subfigure}{.32\textwidth}
		\centering
		\includegraphics[scale=.31]{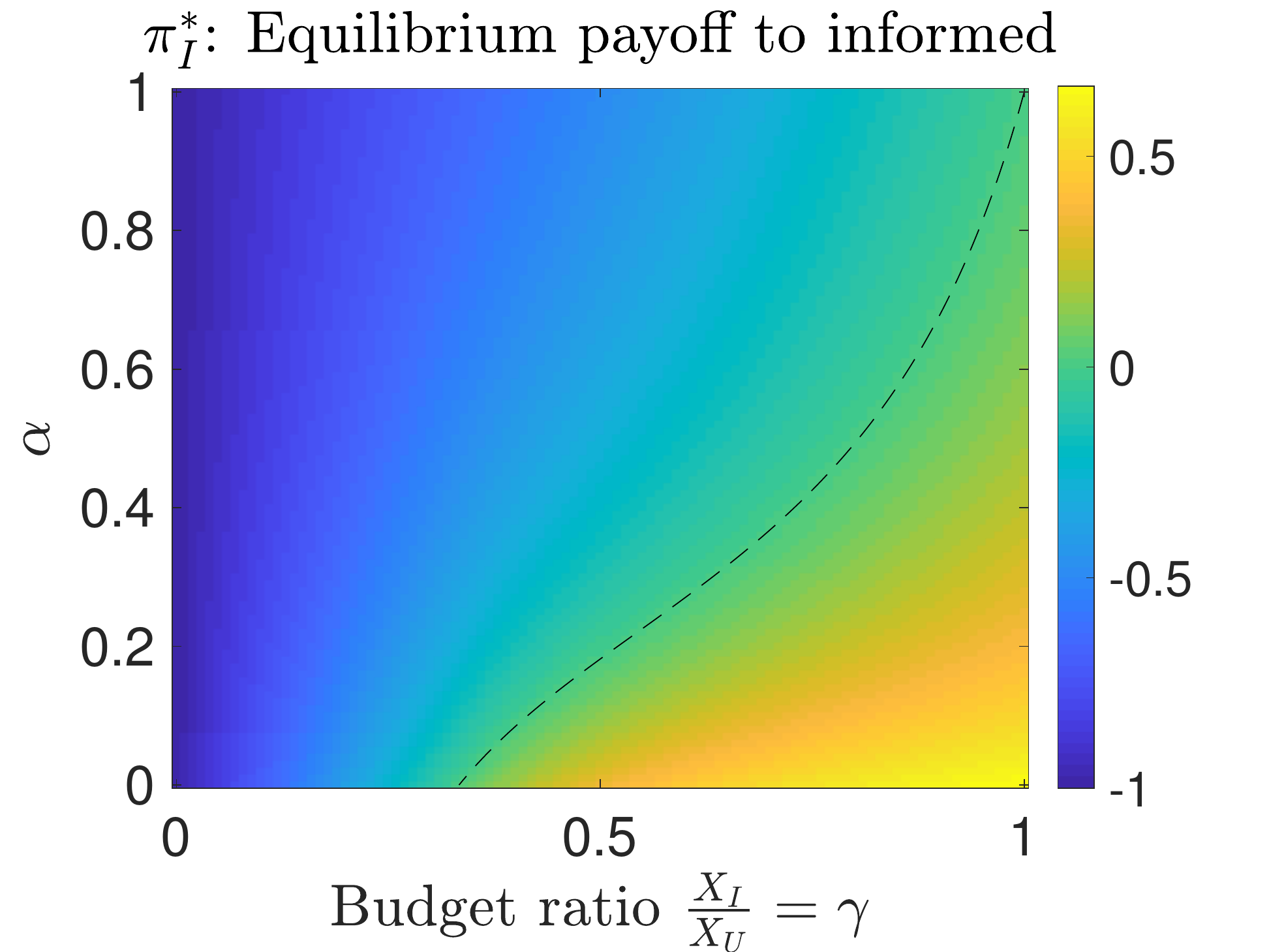}
		\caption{}
		\label{fig:lotto_pi_I}
	\end{subfigure}  
	\begin{subfigure}{.32\textwidth}
		\centering
		\includegraphics[scale=.31]{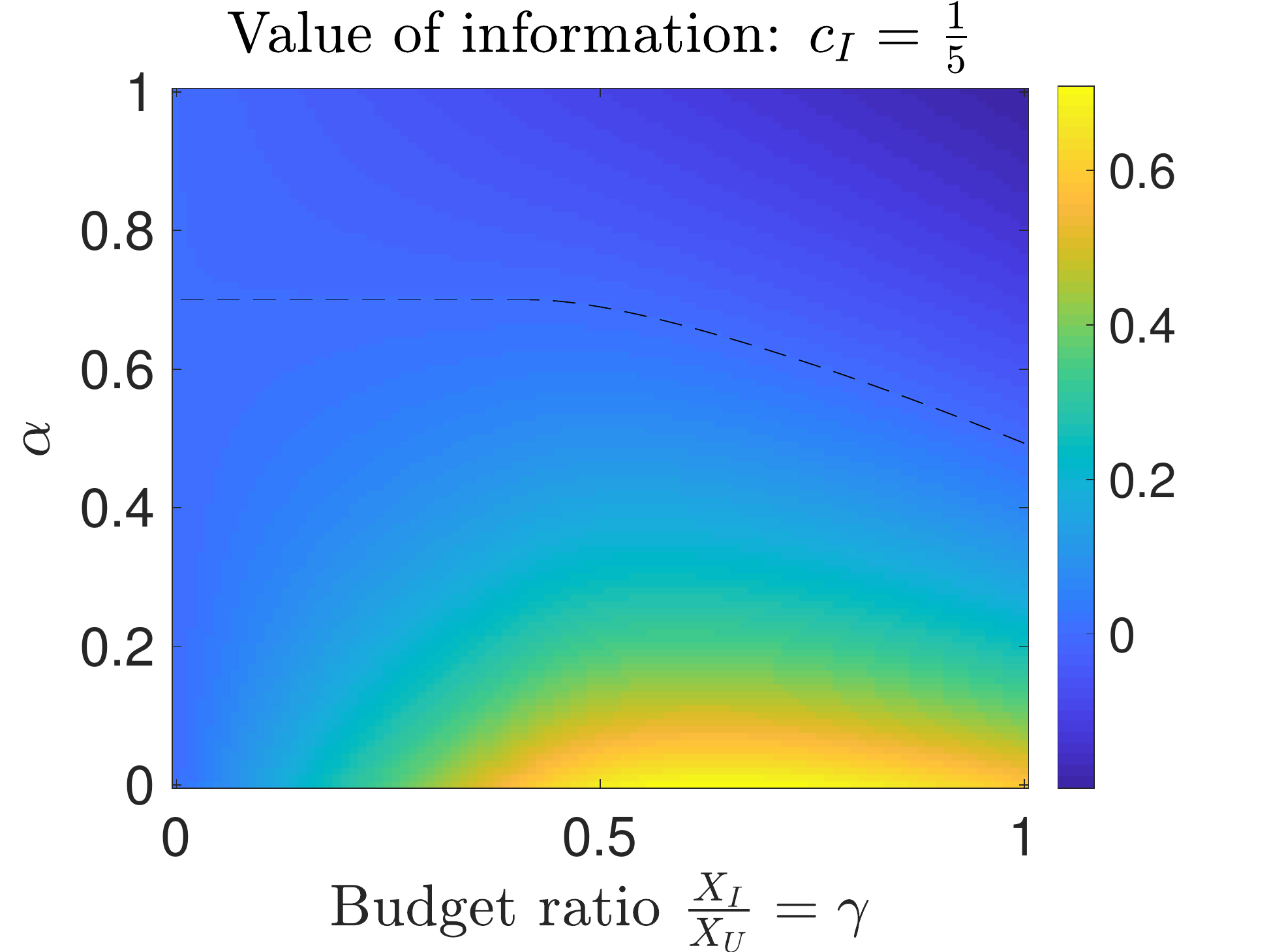}
		\caption{}
		\label{fig:lotto_VoI}
	\end{subfigure}
	\begin{subfigure}{.32\textwidth}
		\centering
		\includegraphics[scale=.31]{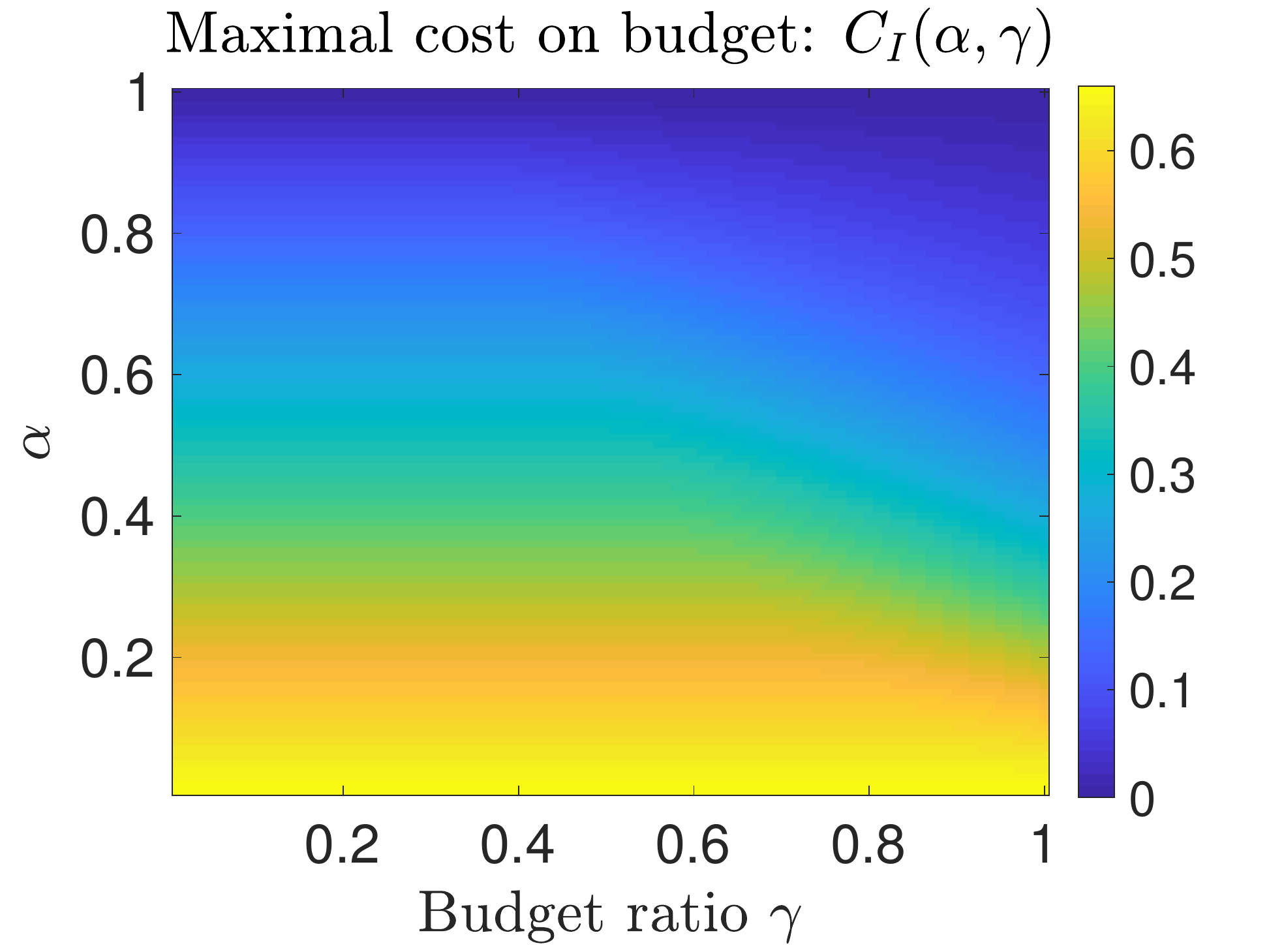}
		\caption{}
		\label{fig:lotto_bVoI}
	\end{subfigure}
	\caption{\small (a) The equilibrium payoff $\pi_I^*$  to the informed player in the Lotto game $\text{GL}(X_I,X_U,V_{\alpha\beta},\frac{1}{3}\mbb{1}_3)$, under the special case $\alpha = \beta$. The dashed line are the points $(\gamma,\alpha)$ for which $\pi_I^*=0$. That is, for parameters below the line, the informed player ``wins" the game and ``loses" for parameters above the line (see Corollary \ref{lemma_zerocrossing}). (b) The value of information, or payoff gain or loss to player $I$ from purchasing information at the cost $c_I = \frac{1}{5}$. The dashed black line in Figure \ref{fig:lotto_VoI} indicates the parameters for which $\text{VoI}(\alpha,\gamma) = 0$. (c) The maximal cost \eqref{eq:bVoI}, or the largest fraction of its resource budget $\gamma$ that player $I$ is willing to give up in exchange for information before it experiences a payoff loss. Information is more valuable in lower $\alpha$ ranges - when the battlefield rewards are concentrated at diverse locations.} 
	\label{fig:lotto_plots}
\end{figure*}

We restrict our attention to a representative three-battlefield Lotto game $\text{GL}(X_I,X_U,V_{\alpha\beta},\bm{p})$, where $\bm{p}=(\frac{1}{3},\frac{1}{3},\frac{1}{3})$, and $V_{\alpha\beta} = \frac{1}{1+\alpha+\beta}\begin{bmatrix} 1 & \alpha & \beta \\ \beta & 1 & \alpha \\ \alpha & \beta & 1 \end{bmatrix}$.
where $1 > \alpha \geq \beta > 0$. In each set of battlefields (rows), the total valuation is normalized to one. Though this formulation may be quite specific, the representative game serves as an illustrative and tractable scenario highlighting the informational asymmetry between players. Intuitively, for small $\alpha,\beta$ values, the valuable battlefield is worth 1 but sits at a  different location in each realization. An informed player would be able to take the most advantage by focusing its resources on this battlefield without wasting resources on the other battlefields. 

\subsection{Main result: characterization of equilibrium payoff}

\begin{theorem}\label{thm_lotto}
	Let $\gamma = \frac{X_I}{X_U} \leq 1$. Then player $I$'s equilibrium payoff, $\pi_I^*(\alpha,\beta,\gamma)$, for $\alpha,\beta,\gamma \in (0,1)$ takes the values
	\begin{equation}\label{eq:lotto_payoff1}
		\begin{array}{ll}
			\frac{3\gamma}{1+\alpha+\beta} - 1, \quad &\text{if } \gamma \in (0,\frac{1}{3}] \\ 				\frac{1}{1+\alpha+\beta} \left[ \left(1 \! -\! \frac{1}{3\gamma}\right)( 3\gamma\alpha + (1\!-\!\alpha)) + 1 \right] \!-\! 1, &\text{if } \gamma \in (\frac{1}{3},\frac{2}{3}] 
		\end{array}
	\end{equation}
	and
	\begin{equation}\label{eq:lotto_payoff2}
		\begin{array}{ll}
		\frac{1}{1\!+\!\alpha\!+\!\beta}\!\left[ 2-\frac{1}{3\gamma} + \alpha\left(2-\frac{1}{\gamma}\right) + 3\beta\gamma\left(1-\frac{2}{3\gamma}\right)\right] \!-\! 1
		\end{array}
	\end{equation}
	if $\gamma \in (\frac{2}{3},1]$.
\end{theorem}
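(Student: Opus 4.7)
The plan is to exploit the cyclic symmetry of the valuation matrix $V_{\alpha\beta}$---which is invariant under simultaneous cyclic permutation of the battlefield and state indices---together with the fact that General Lotto ex-ante payoffs depend only on univariate marginals. Standard convexity/symmetrization arguments for zero-sum Bayesian games allow one to restrict attention to a symmetric equilibrium in which $U$ uses the same univariate marginal $F_U$ on every battlefield, necessarily with mean $X_U/3$, while $I$ uses three marginals $F_1, F_\alpha, F_\beta$---one for whichever battlefield has the value $1$, $\alpha$, or $\beta$ in the realized state---subject to the joint budget constraint $\mu_1 + \mu_\alpha + \mu_\beta = X_I$, where $\mu_k = \mathbb{E}_{F_k}[x]$. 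Summing battlefield-by-battlefield contributions, the ex-ante payoff to $I$ collapses to
$$\pi_I = \frac{1}{1+\alpha+\beta}\Bigl[ M(F_1, F_U) + \alpha M(F_\alpha, F_U) + \beta M(F_\beta, F_U) \Bigr],$$
with $M(F,G) := \int (2G - 1)\, dF$. The three-battlefield, three-state problem thereby reduces to a single effective contest in which $I$ (depending on type) plays one of three distributions against $U$'s single distribution---a mean-constrained asymmetric all-pay-type problem of the kind treated in Section~\ref{sec:APA}, with the nonstandard twist that $I$'s three type-strategies share a single aggregate mean budget.

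From there, the proof would proceed by a case analysis on $\gamma = X_I/X_U$, with the thresholds $\gamma = 1/3$ and $\gamma = 2/3$ corresponding to $X_I$ being just large enough to match $U$'s per-battlefield mean $X_U/3$ on one, two, or all three of the effective sub-contests. For $\gamma \leq 1/3$, $I$'s aggregate budget cannot sustain a full match on its top distribution, so the best response is to place the entire budget on $F_1$ and set $F_\alpha = F_\beta = \delta_0$; $U$ then plays the standard continuous Lotto response, and the classical two-player identity gives $M(F_1, F_U) = 3\gamma - 1$ and $M(F_\alpha, F_U) = M(F_\beta, F_U) = -1$, yielding the first-case formula directly. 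For $\gamma \in (1/3, 2/3]$, the marginal $F_1$ now dominates $U$'s mean, $F_\alpha$ becomes actively contested, and $F_\beta$ stays at zero, while the equilibrium $F_U$ develops an atom at $0$. For $\gamma \in (2/3, 1]$, all three of $I$'s marginals are active and $U$'s support extends further still. In each regime, one would apply the asymmetric all-pay-auction characterization of Siegel~\cite{Siegel_2014} as presented in Section~\ref{sec:APA}, write explicit forms for $F_U, F_1, F_\alpha, F_\beta$, and then integrate to recover the closed-form value of $\pi_I^*$.

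The main obstacle is the joint construction of the equilibrium distributions in Cases~2 and~3: because $U$ plays a single distribution that must simultaneously best-respond against all three of $I$'s marginals, the atoms and supports of $F_U$ must be calibrated so that $U$'s first-order condition balances the three sub-contest trade-offs at once, while each $F_k$ must in turn be an ex-interim best response under the shared mean budget. Careful tie-breaking at $0$ and at the endpoints of overlapping supports is needed, since those details materially affect the values $M(F_k, F_U)$. Once a candidate profile is constructed, verification reduces to the ex-interim best-response check for each type of $I$ and the ex-ante best-response check for $U$; uniqueness of $\pi_I^*$ then follows from the zero-sum structure of the Bayesian game in ex-ante utilities.
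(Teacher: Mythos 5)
Your overall route is the same as the paper's: exploit the cyclic symmetry of $V_{\alpha\beta}$ to reduce to a single per-battlefield contest, invoke the asymmetric-information all-pay auction machinery of \cite{Siegel_2014} as set up in Section~\ref{sec:APA}, split into the regimes $\gamma\in(0,\tfrac{1}{3}]$, $(\tfrac{1}{3},\tfrac{2}{3}]$, $(\tfrac{2}{3},1]$, and recover the value by integrating the constructed marginals; your Case~1 computation is correct and reproduces the first line of \eqref{eq:lotto_payoff1}. The genuine gap is exactly where the theorem's content lies. The ``nonstandard twist'' you flag --- that $I$'s three marginals share one mean budget across types while $U$'s single marginal must satisfy \eqref{eq:lotto_constraint} --- is not resolved in your sketch, and without resolving it you cannot legitimately hand the problem to Siegel's framework, which has no budget constraints. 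The paper handles this by dualizing the budgets: the Lagrangian forms \eqref{eq:lotto_BR_U}--\eqref{eq:lotto_BR_I} turn each battlefield into an all-pay auction with valuations $2v_i^j/\lambda_U$ and $2v_i^jp_i/\lambda_I^i$, Lemma~\ref{lambda_equivalence} shows the type-dependent multipliers coincide ($\lambda_I^i=\lambda_I$) because the rows of $V_{\alpha\beta}$ are permutations of one another, and the unique pair $(\lambda_I,\lambda_U)$ is then pinned down by forcing the expected budgets to bind (e.g.\ $\lambda_U=3\gamma\lambda_I$, with $\lambda_I=\frac{1}{X_U(1+\alpha+\beta)}\left[\frac{1-\alpha}{9\gamma^2}+\alpha\right]$ in the middle regime). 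In particular the thresholds $\tfrac{1}{3},\tfrac{2}{3}$ emerge from the ratio $\lambda_I/\lambda_U$ and the budget-consistency step; in your plan they are postulated rather than derived.

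Moreover, in the two regimes you defer as ``the main obstacle,'' your one concrete structural prediction is wrong: in the paper's equilibria $F_U$ is atomless in all three regimes, and the atoms at zero sit in the informed player's marginals on the lower-valued battlefields ($F_I^\beta=\delta_0$ and $F_I^\alpha(0)=2-3\gamma>0$ when $\gamma\in(\tfrac{1}{3},\tfrac{2}{3})$; $F_I^\beta(0)=3-3\gamma>0$ when $\gamma\in(\tfrac{2}{3},1)$). An atom of $F_U$ at zero cannot survive: shifting that mass to an arbitrarily small positive bid converts ties against $I$'s atoms into wins on the $\alpha$- and $\beta$-valued battlefields at negligible expected-budget cost. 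Since the closed forms in \eqref{eq:lotto_payoff1}--\eqref{eq:lotto_payoff2} come precisely from integrating the explicitly constructed, correctly calibrated marginals against one another, leaving Cases~2 and~3 unconstructed (and, as guessed, miscalibrated) means the theorem is not established for $\gamma>\tfrac{1}{3}$; the verification and uniqueness-of-value arguments you give at the end are fine, but they have nothing to verify until those distributions and multipliers are produced.
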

A plot of $\pi_I^*$ is shown in Figure \ref{fig:lotto_pi_I} for the special case $\alpha=\beta$. We provide the details of the proof in Section \ref{sec:APA}, where we draw upon known results in asymmetric information all-pay auctions. An iterative algorithm is formulated in \cite{Siegel_2014} to construct mixed equilibrium strategies.  To verify that the set of constructed strategies indeed is a BNE of $\text{GL}(X_I,X_U,V_{\alpha\beta},\frac{1}{3}\mbb{1}_3)$, the constraint \eqref{eq:lotto_constraint} must be met.

\subsection{The value of information in the Lotto game}
In the following analysis, we consider a scenario in which both players are uninformed, and the option to purchase information with a fraction of its budget is available to player $I$. We quantify a value of information as the equilibrium payoff gain or loss in purchasing information, and specify the maximal cost $I$ is willing to pay before it experiences a payoff loss. In the case both players are uninformed, we may use the algorithm of \cite{Siegel_2014} to solve the Lotto game. In this setting, we arrive at the equilibrium payoff $\gamma-1$ for $I$. We first highlight some immediate consequences of Theorem \ref{thm_lotto}.  
\begin{corollary}\label{lotto_cor1}
	We have that $\pi_I^*(\alpha,\beta,\gamma) > \gamma-1$ for $(\alpha,\beta,\gamma) \in (0,1)^3$. That is, information  strictly improves the equilibrium payoff for $I$. Additionally, $\pi_I^*$ is strictly decreasing in $\alpha$ and $\beta$, and strictly increasing in $\gamma$.
\end{corollary}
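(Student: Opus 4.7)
The approach is elementary: all three claims follow from direct manipulation of the closed-form expressions for $\pi_I^*$ in Theorem \ref{thm_lotto}, carried out piecewise on the three $\gamma$-ranges $(0,\tfrac{1}{3}]$, $(\tfrac{1}{3},\tfrac{2}{3}]$, and $(\tfrac{2}{3},1]$. A preliminary check shows the three formulas agree at the boundaries $\gamma=\tfrac{1}{3}$ and $\gamma=\tfrac{2}{3}$, so $\pi_I^*$ is continuous in $\gamma$; strict monotonicity established separately on each open piece therefore extends to all of $(0,1)$.

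For the inequality $\pi_I^* > \gamma - 1$, I would compute $\Delta := \pi_I^* - (\gamma-1)$ on each region and show $\Delta > 0$. In Region 1, $\Delta = \gamma(2-\alpha-\beta)/(1+\alpha+\beta)$, which is positive because $\alpha+\beta < 2$. In Regions 2 and 3, I would place $\Delta$ over the common denominator $3\gamma(1+\alpha+\beta)$ and verify positivity of the resulting numerator, either by collecting the coefficients of $\alpha^i \beta^j$ and signing them individually, or by regrouping the numerator as a sum of manifestly positive summands.

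For monotonicity in $\alpha$ and $\beta$, I would differentiate each piece. Every formula has the form $N(\alpha,\beta,\gamma)/(1+\alpha+\beta) - 1$, so $\partial_\alpha \pi_I^* = \bigl[(1+\alpha+\beta)\partial_\alpha N - N\bigr]/(1+\alpha+\beta)^2$, and analogously for $\beta$. In Region 1, $N = 3\gamma$ has no $\alpha,\beta$ dependence and the sign is immediate. In Regions 2 and 3, multiplying the numerator through by $3\gamma$ produces a quadratic in $\gamma$ whose convexity together with its values at the two region endpoints suffices to determine the sign. Monotonicity in $\gamma$ is the cleanest: on each piece the $\gamma$-derivative is visibly a sum of strictly positive terms.

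The main obstacle is bookkeeping rather than anything conceptual: arranging the $\alpha$- and $\beta$-derivative numerators in Regions 2 and 3 so that their sign is transparent across the entire relevant $\gamma$-range. The plan is to write each numerator in the form $A(\gamma) + B(\gamma)\beta$ (or the analogous expression in $\alpha$) with $A,B$ polynomials in $\gamma$, exploit the convexity of the resulting quadratic to reduce to sign checks at the two endpoints, and conclude using the a priori bounds $\alpha,\beta \in (0,1)$ and $\gamma \leq 1$.
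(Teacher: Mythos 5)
Your proposal is correct and follows essentially the same route as the paper, whose own proof is just a two-sentence sketch of exactly this: compare each piecewise expression of Theorem \ref{thm_lotto} to $\gamma-1$, and check the signs of the partial derivatives. Your additional bookkeeping (continuity at the boundary values $\gamma=\tfrac{1}{3},\tfrac{2}{3}$ so that piecewise monotonicity extends globally, and the convexity/endpoint argument for signing the $\alpha$- and $\beta$-derivative numerators in the middle and upper regions) is a sound and more explicit execution of the same argument.
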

\begin{proof}
	The first claim follows from comparing the expressions \eqref{eq:lotto_payoff1} and \eqref{eq:lotto_payoff2} to $\gamma - 1$. The second claim follows by showing the respective signs of partial derivatives hold.
\end{proof}
We also characterize the parameter region in which an informed $I$ wins the game for the special case $\alpha=\beta$. 
\begin{corollary}\label{lemma_zerocrossing}
	Fix a budget ratio $\gamma$. Then $\pi_I^*(\alpha,\alpha,\gamma) > 0$ if and only if $\alpha<\frac{\frac{1}{3}-\gamma}{3\gamma^2 - 4\gamma + \frac{1}{3}}$.
\end{corollary}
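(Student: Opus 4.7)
The plan is to combine the strict monotonicity from Corollary~\ref{lotto_cor1} with an explicit computation of the zero of $\pi_I^*(\alpha,\alpha,\gamma)$ using Theorem~\ref{thm_lotto}. Corollary~\ref{lotto_cor1} asserts that $\pi_I^*(\alpha,\beta,\gamma)$ is strictly decreasing in $\alpha$ and in $\beta$, so along the diagonal $\beta=\alpha$ the map $\alpha\mapsto\pi_I^*(\alpha,\alpha,\gamma)$ remains strictly decreasing for each fixed $\gamma$. Hence there is at most one threshold $\alpha^*(\gamma)$ at which $\pi_I^*(\alpha^*,\alpha^*,\gamma)=0$, and the equivalence $\pi_I^*(\alpha,\alpha,\gamma)>0 \Longleftrightarrow \alpha<\alpha^*(\gamma)$ follows immediately from monotonicity. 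It therefore suffices to identify $\alpha^*(\gamma)$ in closed form.

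To locate $\alpha^*(\gamma)$, I substitute $\beta=\alpha$ into the appropriate piecewise expression for $\pi_I^*$ from Theorem~\ref{thm_lotto} and solve the resulting equation. In the regime where a nontrivial crossing takes place, namely $\gamma\in(1/3,2/3]$, multiplying through by $(1+2\alpha)$ reduces the vanishing condition to the linear relation
\begin{equation*}
\left(1-\tfrac{1}{3\gamma}\right)\bigl((3\gamma-1)\alpha+1\bigr) \;=\; 2\alpha.
\end{equation*}
Writing $\lambda := (3\gamma-1)/(3\gamma)$ and collecting the $\alpha$-terms gives $\alpha\bigl(2-\lambda(3\gamma-1)\bigr) = \lambda$. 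Substituting $2-\lambda(3\gamma-1) = -(9\gamma^2-12\gamma+1)/(3\gamma)$ and cancelling a common factor of $3\gamma$ then yields
\begin{equation*}
\alpha^*(\gamma) \;=\; \frac{1-3\gamma}{9\gamma^2-12\gamma+1} \;=\; \frac{\tfrac{1}{3}-\gamma}{3\gamma^2-4\gamma+\tfrac{1}{3}},
\end{equation*}
which is the claimed threshold.

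For the remaining ranges I would perform the analogous substitute-and-solve computation as a consistency check. In particular, in the low regime $\gamma\in(0,1/3]$ Theorem~\ref{thm_lotto} gives $\pi_I^*(\alpha,\alpha,\gamma)=3\gamma/(1+2\alpha)-1$, which is manifestly non-positive for every $\alpha>0$; this matches the displayed bound, whose numerator $\tfrac{1}{3}-\gamma$ is non-negative while one may check the denominator is negative over $(\tfrac{2-\sqrt{3}}{3},\tfrac{1}{3}]$, so the threshold is non-positive and vacuous there. The main obstacle I anticipate is purely algebraic cleanliness rather than any conceptual difficulty: care is required when dividing by $9\gamma^2-12\gamma+1$, whose roots $(2\pm\sqrt{3})/3$ sandwich the interval of interest so that this factor has a single definite (negative) sign on $(1/3,2/3]$ and the direction of the inequality is preserved under division. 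Combined with the monotonicity of Corollary~\ref{lotto_cor1}, the biconditional in the statement is then immediate.
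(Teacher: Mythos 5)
Your middle-regime computation is correct and, together with the monotonicity argument, essentially reproduces the paper's route: the paper also disposes of $\gamma\in(0,\tfrac{1}{3}]$ by the sign of the first branch and then solves $\pi_I^*(\alpha,\alpha,\gamma)=0$ for $\alpha$; your explicit appeal to Corollary \ref{lotto_cor1} to get uniqueness of the crossing and hence the biconditional is a useful justification of a step the paper leaves implicit.

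The genuine gap is the range $\gamma\in(\tfrac{2}{3},1]$. The corollary fixes an arbitrary $\gamma$, and the crossing is by no means confined to $(\tfrac{1}{3},\tfrac{2}{3}]$: with the stated threshold one has, e.g., $\alpha^*(0.8)\approx 0.49\in(0,1)$, so deferring this range to a ``consistency check'' leaves the claim unproven exactly where it has content. The paper closes it with one specific observation that your argument is missing: when $\beta=\alpha$, the second expression of \eqref{eq:lotto_payoff1} is asserted to coincide with \eqref{eq:lotto_payoff2}, so the single computation on the middle branch extends verbatim to all $\gamma\in(\tfrac{1}{3},1]$. You need either that identification or the explicit solve of \eqref{eq:lotto_payoff2} at $\beta=\alpha$; and note that the latter, taken exactly as printed, gives the crossing $\alpha^*=\frac{3\gamma-1}{3(3\gamma+1)(1-\gamma)}$, which agrees with the stated threshold only at $\gamma=\tfrac{2}{3}$ --- so this step is not routine, and reconciling it (it appears to trace to a typo in one of the displayed payoff expressions of Theorem \ref{thm_lotto}) cannot be skipped. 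A smaller, shared issue: your vacuousness argument in the low regime covers only $\gamma\in\bigl(\tfrac{2-\sqrt{3}}{3},\tfrac{1}{3}\bigr]$; for $\gamma\le\tfrac{2-\sqrt{3}}{3}$ the threshold exceeds one while $\pi_I^*<0$, so the biconditional as literally stated already fails there. The paper's own proof (``player $I$ cannot win for $\gamma\le\tfrac{1}{3}$'') glosses over the same point, so this is a defect of the statement rather than of your argument, but it is worth flagging rather than asserting vacuousness throughout $(0,\tfrac{1}{3}]$.
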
 
\begin{proof}
	Player $I$ cannot win in the first region $\gamma \in (0,\frac{1}{3}]$, as $\pi_I^* \leq 0$. We find that the second expression of \eqref{eq:lotto_payoff1} is equivalent to \eqref{eq:lotto_payoff2} when $\beta=\alpha$. We then solve the equation $\pi_I^*(\alpha,\alpha,\gamma) = 0$ for $\alpha$, from which we obtain the result.
\end{proof}


 We now provide a general quantity of describing the value of information. In particular, we consider a scenario where both players are uninformed. Player $I$, with the budget ratio $\gamma$, has an opportunity to purchase information with a fraction $c_I$ of its budget. The value of information quantifies the equilibrium payoff gain or loss in purchasing the information at the budget fraction cost $c_I$. Formally, the value of information is the quantity
\begin{equation}
	\text{VoI}(\alpha,\gamma) := \pi_I^*(\alpha,\alpha,(1-c_I)\gamma) - (1-\gamma).
\end{equation}
An instance of the value of information is plotted in Figure \ref{fig:lotto_VoI}, when the cost is $c_I = \frac{1}{5}$. We note there is a regime in which information is not worth the cost, i.e. $\text{VoI}(\alpha,\gamma) < 0$. 
 
We also seek to find the highest cost on information that $I$ is willing to pay before it experiences an equilibrium payoff loss. To quantify this cost, let $\gamma_e$ be defined as the value that satisfies $\pi_I^*(\alpha,\alpha,\gamma_e) = \gamma-1$. Such a value is unique and well-defined for any $\alpha$, since $\pi_I^*(\alpha,\alpha,\gamma) > \gamma-1$ and is strictly increasing in $\gamma$, by Corollary \ref{lotto_cor1}. Then $C_I(\alpha,\gamma) := \frac{\gamma-\gamma_e}{\gamma}$ is the largest fraction of resources $I$ can give up. That is, for all $c_I \in [0,C_I(\alpha,\gamma)]$, we have $\pi_I^*(\alpha,\alpha,(1-c_I)\gamma) \geq \gamma-1$, with equality if and only if $c_I = C_I(\alpha,\gamma)$.  Then,
\begin{corollary}
	Fix a budget ratio $\gamma$. Then
    	\begin{equation}\label{eq:bVoI}
    		C_I(\alpha,\gamma) = 
		\begin{cases}
			\frac{1}{\gamma} (\gamma - \gamma_e(\alpha,\gamma)) \quad &\text{if } (\alpha,\gamma)\in B\\
			1-\frac{1}{3c_\alpha} &\text{if } (\alpha,\gamma)\notin B
		\end{cases}
    	\end{equation}
	where $\gamma_e(\alpha,\gamma) \!:=\! \frac{-(2(1-\alpha)c_\alpha - \gamma)+\sqrt{(2(1-\alpha)c_\alpha - \gamma)^2 + 4c_\alpha^2 \alpha(1-\alpha)}}{6\alpha c_\alpha}$, $c_\alpha := \frac{1}{1+2\alpha}$, and $B := \{(\alpha,\gamma) \in (0,1)^2 :  \gamma_e(\alpha,\gamma)  \geq \frac{1}{3} \}$.
\end{corollary}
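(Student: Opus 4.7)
Define $\gamma_e$ as the unique value satisfying $\pi_I^*(\alpha,\alpha,\gamma_e)=\gamma-1$. Uniqueness and $\gamma_e\in(0,\gamma)$ follow from Corollary~\ref{lotto_cor1}, since $\pi_I^*(\alpha,\alpha,\cdot)$ is continuous, strictly increasing on $(0,1)$, and lies strictly above $\gamma-1$ at argument $\gamma$. The plan is to solve this equation for $\gamma_e$ explicitly by substituting the piecewise closed form of Theorem~\ref{thm_lotto} on each regime, and then set $C_I(\alpha,\gamma)=(\gamma-\gamma_e)/\gamma$. The set $B$ simply records which regime the unique root lies in.

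\textbf{Case $(\alpha,\gamma)\in B$ (so $\gamma_e\geq 1/3$).} Substitute $\beta=\alpha$ into the second piece of \eqref{eq:lotto_payoff1}. Writing $c_\alpha=1/(1+2\alpha)$ and clearing denominators by multiplying the equation through by $1/c_\alpha$ and then by $3\gamma_e$, the relation $\pi_I^*(\alpha,\alpha,\gamma_e)=\gamma-1$ reduces to the quadratic
\[
9\alpha\gamma_e^2 - 3\bigl(\gamma/c_\alpha - 2(1-\alpha)\bigr)\gamma_e - (1-\alpha) = 0.
\]
The product of its two roots equals $-(1-\alpha)/(9\alpha)<0$, so exactly one root is positive. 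Taking that root and rescaling numerator and denominator by $c_\alpha$ yields the claimed closed form for $\gamma_e(\alpha,\gamma)$; then $C_I = 1-\gamma_e/\gamma$ gives the first branch of \eqref{eq:bVoI}.

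\textbf{Case $(\alpha,\gamma)\notin B$ (so $\gamma_e<1/3$).} The first piece of \eqref{eq:lotto_payoff1} applies, giving the linear identity $3c_\alpha\gamma_e-1=\gamma-1$, so $\gamma_e=\gamma/(3c_\alpha)$ and $C_I=1-1/(3c_\alpha)$, matching the second branch. Continuity of $\pi_I^*(\alpha,\alpha,\cdot)$ at $\gamma_e=1/3$ makes the two expressions for $\gamma_e$ agree on $\partial B$, so the piecewise formula is internally consistent; monotonicity of $\pi_I^*(\alpha,\alpha,\cdot)$ also confirms $\gamma_e\le\gamma$, so $C_I\in[0,1)$.

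The only real obstacle is the algebraic bookkeeping in Case 1 — correctly identifying the positive root of the quadratic and rewriting it in the exact form stated in the corollary — together with the implicit use of the observation from the proof of Corollary~\ref{lemma_zerocrossing}, that on $\beta=\alpha$ the second piece of \eqref{eq:lotto_payoff1} agrees with \eqref{eq:lotto_payoff2}, so a single quadratic handles the whole regime $\gamma_e\ge1/3$ without a separate case analysis for $\gamma_e>2/3$. Everything else is routine substitution into Theorem~\ref{thm_lotto}.
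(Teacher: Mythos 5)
Your proposal is correct and follows essentially the same route as the paper: substitute $\beta=\alpha$ into Theorem \ref{thm_lotto}, solve $\pi_I^*(\alpha,\alpha,\gamma_e)=\gamma-1$ as a quadratic in $\gamma_e$ on the regime $\gamma_e\geq\frac{1}{3}$ (using the paper's own observation that the second piece of \eqref{eq:lotto_payoff1} covers this regime when $\beta=\alpha$), take the unique positive root, and fall back to the linear first piece when that root is below $\frac{1}{3}$. You simply make explicit the algebra (the quadratic, the sign argument for root selection, and the monotonicity/continuity justification for uniqueness of $\gamma_e$) that the paper's proof leaves implicit.
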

\begin{proof}
	We solve the equation $\pi_I^*(\alpha,\alpha,\gamma_e) = \gamma-1$ for $\gamma_e$. Recall the second expression of \eqref{eq:lotto_payoff1} is equivalent to \eqref{eq:lotto_payoff2} when $\beta=\alpha$. The resulting value is valid only if $\gamma_e > \frac{1}{3}$. If not, we solve for $\gamma_e$ using the first entry of \eqref{eq:lotto_payoff1}.
\end{proof}
Figure \ref{fig:lotto_bVoI} plots $C_I(\alpha,\gamma)$. As an example, when $\gamma=1$ and $\alpha=0$, player $I$ can exchange up to $\frac{2}{3}$ of its budget for information and still obtain a payoff greater than $\gamma-1 = 0$. When $\alpha$ is high, player $I$ cannot afford to give up resources, as information is less valuable in this regime. 



\section{Proof of Theorem \ref{thm_lotto}}\label{sec:APA}
We show that the equilibria in the Lotto game $\text{GL}(X_I,X_U,V,\bm{p})$ coincides with equilibria of $n$ independent all-pay auctions with asymmetric information and valuations. The work of Siegel \cite{Siegel_2014} provides an iterative algorithm for which to construct equilibrium mixed strategies in the all-pay auction. We leverage this procedure to construct proposed equilibrium distributions in the Lotto game, and verify that they satisfy the constraint \eqref{eq:lotto_constraint}. 

\subsection{Two-player all-pay auctions with asymmetric information and valuations}
Here, we specialize the setup of \cite{Siegel_2014} to our setting of an informed player $I$ and an uninformed player $U$. Assume the two players compete in an all-pay auction over an indivisible good, where the players $I$ and $U$ have the same type spaces as before. When type $t_i \in T_I$ is realized ($i\in\{1,\ldots,m\}$) with probability $p_i > 0$, player $I$'s valuation of the good is $\nu_I(t_i)$ and player $U$'s valuation is $\nu_U(t_i)$. Assume the types in $T_I$ are ordered such that 
\begin{equation}\label{eq:WM}
	 \nu_I(t_i) \geq \nu_I(t_k) 
\end{equation}
whenever $i < k$, $i,k \in \{1,\ldots,m\}$. In the two player all-pay auction with asymmetric information and valuations, player $I$ selects the distributions of bids $F_I : T_I \times \mbb{R}_+ \rightarrow [0,1]$ contingent on its type. We write $F_I(t_i)$ to refer to the distribution of bids in type $i$, $F_I(t_i,x)$ to refer to its value at $x\geq 0$, and $f_I$ to refer to its density function. As player $U$ has a single type, it selects a single distribution of bids, $F_U: \mbb{R}_+ \rightarrow [0,1]$. The best-response problems for each player are
\begin{equation}\label{eq:APA_BR}
	\begin{aligned}
		\max_{\substack{F_I(t_i) \\ i=1,\ldots,m}} \sum_{i=1}^m p_i \int_0^\infty \left[\nu_I(t_i)F_U - x \right] dF_I(t_i) \\
		\max_{F_U} \sum_{i=1}^m p_i \int_0^\infty \left[ \nu_U(t_i) F_I(t_i) - x \right] dF_U .
	\end{aligned}
\end{equation}

\subsection{Algorithm of \cite{Siegel_2014}}
An iterative procedure is formulated in \cite{Siegel_2014} to construct equilibrium mixed strategies satisfying the correspondences \eqref{eq:APA_BR}. We note that this procedure handles general player information structures in the two-player all-pay auction. In this paper, we restrict our attention to its application in our informed-uninformed setting. In particular, the constructed marginals are proven to be piecewise constant functions with finite support, with the possibility of having point masses placed at zero. We do not give further details of this algorithm due to space limitations and for ease of exposition. We refer the reader to \cite{Siegel_2014} for a general proof that the output distributions satisfy \eqref{eq:APA_BR}.

\subsection{Connection of General Lotto to all-pay auctions}
In the game $\text{GL}(X_I,X_U,V,\bm{p})$, player $U$'s Lagrangian at both ex-ante and interim levels may be written as
\begin{equation}\label{eq:lotto_BR_U}
	\begin{aligned}
		\sum_{j=1}^n \max_{F_U^j } \left[ \sum_{i=1}^m p_i \int_0^\infty \left( \frac{2 v_i^j}{\lambda_U} F_I^j(t_i) - x \right) dF_U^j \right] 
	\end{aligned}
\end{equation}
where $\lambda_U>0$ is the Lagrange multiplier on U's expected budget constraint \eqref{eq:lotto_constraint}, and we have removed constant additive and multiplicative terms in the expression that do not depend on the decision variables $\{F_U^j\}_{j=1}^n$. In a similar fashion, player $I$'s  Lagrangian maximization at the interim level may be written
\begin{equation}\label{eq:lotto_BR_I}
		\max_{\substack{F_I^j(t_i) \\ j=1,\ldots,n}} \lambda_I^i  \sum_{j=1}^n  \int_0^\infty\left( \frac{2v_i^j }{\lambda_I^i}  F_U^j - x \right) dF_I^j(t_i) 
\end{equation}
where the multiplier $\lambda_I^i >0$ corresponds to the budget constraint \eqref{eq:lotto_constraint} in type $t_i$. When two sets of battlefields contain the same valuations, we can deduce equivalence between the corresponding Lagrange multipliers.
\begin{lemma}\label{lambda_equivalence}
	Consider the game $\text{GL}(X_I,X_U,V,\bm{p})$. Suppose the rows $i$ and $k$ of $V$ have the same elements, each with identical multiplicities. Then the equilibrium ex-interim payoff to player I for type $t_i$ is equivalent to that of type $t_k$. Furthermore, $\lambda_I^{i} = \lambda_I^{k}$. 
\end{lemma}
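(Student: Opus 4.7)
The plan is to leverage the symmetry between types $t_i$ and $t_k$ induced by the assumption that rows $i$ and $k$ of $V$ are permutations of each other. Let $\sigma$ denote a permutation on $\{1,\dots,n\}$ satisfying $v_k^j = v_i^{\sigma(j)}$ for all $j$.

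The first step is to identify an equilibrium $(F_I^*, F_U^*)$ whose marginals for $U$ are $\sigma$-invariant, namely $F_U^{*,\sigma(j)}(x) = F_U^{*,j}(x)$ for every $j$ and $x \geq 0$. To do so, I would observe that, provided the prior satisfies $p_i = p_k$, the composite transformation $T$ that simultaneously swaps the type labels $i \leftrightarrow k$ and permutes the battlefield indices by $\sigma$ leaves the game invariant. Since the set of mixed equilibria of a zero-sum game is convex, averaging $F_U^*$ over the (finite) orbit of $T$ produces another equilibrium marginal profile that is $T$-invariant, and hence $\sigma$-invariant across battlefields. In the application of this lemma to $\text{GL}(X_I,X_U,V_{\alpha\beta},\frac{1}{3}\mbb{1}_3)$, the uniform prior ensures this condition automatically.

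The second step is to exhibit a feasible strategy in type $t_k$ that attains $\pi_I^*(t_i)$. Define $\widehat{F}^j := F_I^{*,\sigma(j)}(t_i)$ for each $j$. Relabeling via $l = \sigma(j)$ and invoking the $\sigma$-invariance of $F_U^*$ yields
\[
\sum_j v_k^j \int (2 F_U^{*,j} - 1)\, d\widehat{F}^j = \sum_l v_i^l \int (2 F_U^{*,l} - 1)\, dF_I^{*,l}(t_i) = \pi_I^*(t_i),
\]
and the budget $\sum_j \mbb{E}[x] = X_I$ is preserved by the permutation. Thus $\pi_I^*(t_k) \geq \pi_I^*(t_i)$; interchanging the roles of $i$ and $k$ gives the reverse inequality and hence $\pi_I^*(t_i) = \pi_I^*(t_k)$.

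For the Lagrange multipliers, strong duality identifies $\lambda_I^i$ (resp.\ $\lambda_I^k$) as a minimizer of the dual function $\lambda \mapsto \lambda X_I + \sum_j \sup_x(2v_i^j F_U^{*,j}(x) - \lambda x)$ (resp.\ with $v_k^j$ in place of $v_i^j$). Under the $\sigma$-invariance of $F_U^*$ these duals differ only by a relabeling of the summation index and therefore agree as functions of $\lambda$, yielding $\lambda_I^i = \lambda_I^k$. The main obstacle is the first step: establishing $\sigma$-invariance of $F_U^*$ implicitly requires $\bm{p}$ to treat types $t_i$ and $t_k$ symmetrically, a hypothesis satisfied in the games to which the lemma is applied but not stated explicitly.
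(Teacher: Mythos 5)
Your overall strategy (symmetrize the equilibrium, transport player $I$'s type-$t_i$ strategy to type $t_k$ via $\sigma$, then compare dual functions) is more explicit than the paper's argument, which simply asserts that the two interim problems are identical up to a relabeling of battlefields and reads off $\lambda_I^i = 2v_i^j f_U^{j*}(x^I_j) = \lambda_I^k$ from first-order conditions at a common support point. However, your key first step has a genuine gap: the transformation $T$ that swaps the type labels $i \leftrightarrow k$ and permutes the battlefield indices by $\sigma$ is in general \emph{not} an automorphism of the game, because the battlefield permutation also acts on every other row $l \notin \{i,k\}$ of $V$; invariance would additionally require $v_l^{\sigma(j)} = v_l^j$ for all such $l$ (on top of $p_i = p_k$, which you do flag). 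This failure already occurs in the very game the lemma is used for: in $\text{GL}(X_I,X_U,V_{\alpha\beta},\frac{1}{3}\mbb{1}_3)$ with $\alpha \neq \beta$, taking $i=1$, $k=2$ forces $\sigma$ to be the cyclic shift of battlefields, under which row $3$ is mapped to row $1$'s valuation profile; hence $T$ does not leave the game invariant, $T F_U^*$ need not be optimal, and the orbit average need not be an equilibrium. The repair is to symmetrize over a joint permutation of \emph{all} types together with the battlefields that fixes $(V,\bm{p})$ --- for $V_{\alpha\beta}$ with the uniform prior, the cyclic group shifting rows and columns simultaneously --- which yields an equilibrium whose $U$-marginals agree across the relevant battlefields; your second and third steps then go through unchanged. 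So your route works, but only under a stronger symmetry hypothesis than the one you invoke (and than the lemma states); this is exactly the hidden assumption that the paper's own terse proof also relies on when it declares the two interim problems ``identical.''

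A smaller point: identifying $\lambda_I^i$ as ``a minimizer'' of the dual function only yields equality of the \emph{sets} of optimal multipliers; to conclude $\lambda_I^i = \lambda_I^k$ as numbers you should argue uniqueness of the minimizer (e.g., strict monotonicity of the expected expenditure in the multiplier on the relevant range). The paper's first-order-condition argument is similarly loose here, so this is secondary to the symmetrization issue above.
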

\begin{proof}
	An equivalent formulation of \eqref{eq:lotto_BR_I} in type $i$ is 
	\begin{equation}\label{eq:interim_nolambda}
		\max_{F_I(t_i) \in \mcal{L}(X_I)} \pi_I(F_I,F_U | t_i)
	\end{equation}
	The corresponding problem for type $t_k$ is identical to the above, because the valuations in both rows are the same (possibly with some permutation of indices $j$), and the optimization for player U remains \eqref{eq:lotto_BR_U}. This shows the equivalence of  interim equilibrium payoffs.
	
	To show $\lambda_I^i = \lambda_I^k$, let $F_I^*$ solve \eqref{eq:interim_nolambda} for both types $t_i$ and $t_k$. Any allocation $x_j^I \in \mbb{R}_+$ to battlefield $j$ in the support of  $F_I^{j*}$ solves the one-dimensional problem $\max_{x}  \frac{2v_i^j }{\lambda_I^i}F_U^{j*} - x$ as well as $\max_{x}  \frac{2v_i^j }{\lambda_I^k}F_U^{j*} - x$. Therefore, the first-order necessary condition for optimality that holds is $\lambda_I^i = 2v_i^jf_U^{j*}(x_j^I) = \lambda_I^k$.

\end{proof}
To solve for a BNE, we write the best-response correspondences at the ex-ante level.   Player U's ex-ante best-response problem is given by \eqref{eq:lotto_BR_U}, while Player I's ex-ante  optimization problem may be written
\begin{equation}\label{eq:lotto_BR_I_exante}
	 \sum_{j=1}^n \max_{\substack{F_I^j(t_i) \\ i=1,\ldots,m}} \left[ \sum_{i=1}^m \lambda_I^i  \int_0^\infty \left( \frac{2 v_i^j p_i}{\lambda_I^i} F_U^j - x \right) dF_I^j(t_i) \right]
\end{equation}
after removing additive constants. Under the conditions $\lambda_I^i = \lambda_I>0$  for all $i=1,\ldots,m$ (by Lemma \ref{lambda_equivalence}), each battlefield $j$ is an independent all-pay auction, whose problem is
\begin{align}
		&\max_{F_U^j } \left[ \sum_{i=1}^m p_i \int_0^\infty \left( \frac{2 v_i^j}{\lambda_U} F_I^j(t_i) - x \right) dF_U^j \right] \label{eq:BR_Uj} \\
		&\max_{\substack{F_I^j(t_i) \\ i=1,\ldots,m}} \left[  \sum_{i=1}^m p_i \int_0^\infty \left( \frac{2 v_i^jp_i }{\lambda_I} F_U^j - x \right) dF_I^j(t_i) \right] \label{eq:BR_Ij} 
\end{align}
which coincides with \eqref{eq:APA_BR} with auction valuations $\nu_U(t_i) = \frac{2 v_i^j}{\lambda_U}$ and $\nu_I(t_i) = \frac{2 v_i^j p_i }{\lambda_I}$. Here, we have multiplied each maximization problem of \label{eq:lotto_BR_I_exante} by $p_i > 0$, which does not change the optimal solutions.  In this setting, mixed-strategy equilibria of the Lotto game are equivalent to that of $n$ independent  two-player all-pay auctions with asymmetric information and valuations.

We may then apply the algorithm of \cite{Siegel_2014} to construct equilibrium distributions  $F_U^j$ and $\{ F_I^j(t_i)\}_{i=1}^m$ for each battlefield $j$. The constructed distributions are functions of the known parameters as well as the Lagrange multipliers. If there exists unique multipliers $\lambda_I^*,\lambda_U^*>0$ such that the  Lotto constraints \eqref{eq:lotto_constraint} is met for all types $\{t_I^i\}_{i=1}^m$ for $I$ and for type $t_U$ for $U$, then it is  clear the $n$ constructed strategy profiles constitute a BNE for  $\text{GL}(X_I,X_U,V,\bm{p})$. Indeed, we have applied the algorithm  to obtain a set of distributions, and verified there is a unique   $\lambda_I^*,\lambda_U^*>0$ such that \eqref{eq:lotto_constraint} is satisfied. The details of this calculation are outlined as follows. 
\begin{proof}[Proof of Theorem \ref{thm_lotto}]
	In $\text{GL}(X_I,X_U,V_{\alpha\beta},\frac{1}{3}\mbb{1}_3)$, we deduce from Lemma \ref{lambda_equivalence} that $\lambda_I^i = \lambda_I$ for all $i=1,2,3$. Hence we need only apply  the algorithm of \cite{Siegel_2014} to a single ``column" of the game to obtain all marginals of the BNE since the valuations are identical in each battlefield, i.e. the correspondences \eqref{eq:BR_Uj}, \eqref{eq:BR_Ij} are the same for all battlefields $j$, with a permutation of indices $i\in\{1,2,3\}$. 
	
Denote the constructed marginal distributions as $F_U$ for player $U$ and $\{F_I^{\text{d}}, F_I^{\alpha}, F_I^{\beta}\}$ for player $I$, where ``d" is for the diagonal battlefield value 1. We find that depending on whether $\frac{\lambda_I}{\lambda_U} \geq 1$,  $\frac{\lambda_I}{\lambda_U} \in (\frac{1}{2},1)$, or $\frac{\lambda_I}{\lambda_U} \in (\frac{1}{2},\frac{1}{3})$, the algorithm determines three distinct sets of marginal distributions for $I$ and $U$. In each case, there are unique $\lambda_I,\lambda_U>0$ such that the constraint \eqref{eq:lotto_constraint} is met. For brevity, we illustrate the calculation for one such case, as the other two follow similar methods. If $\frac{\lambda_I}{\lambda_U} \geq 1$,  the algorithm of \cite{Siegel_2014} gives
\begin{equation}
            F_U(x) = \frac{3\lambda_I}{2c}x,  \quad  x \in \left[0,\frac{2c}{3\lambda_I}\right]
\end{equation}
\begin{equation}
	\begin{array}{ll}
        		F_I^{\text{d}}(x) = \frac{3\lambda_U}{2c}x + 1 - \frac{\lambda_U}{\lambda_I}, & x \in \left[0,\frac{2c}{3\lambda_I}\right] \\
        		F_I^\alpha(x) = 1, & x \geq 0\\
        		F_I^\beta(x) = 1, & x \geq 0
        	  \end{array}
\end{equation}
The constraint \eqref{eq:lotto_constraint} requires that $\mbb{E}_{F_I^{\text{d}}} + \mbb{E}_{F_I^\alpha} + \mbb{E}_{F_I^\beta} = X_I$ and $3\mbb{E}_{F_U} = X_U$, from which we obtain the unique solutions $\lambda_I = \frac{1}{X_U(1+\alpha+\beta)}$ and $\lambda_U = 3\gamma\lambda_I$. This solution implies the budget ratio satisfies $\gamma \in (0,\frac{1}{3})$. Using these marginals, we calculate the equilibrium ex-ante payoff \eqref{eq:I_exante} as
\begin{equation}
	\frac{2}{1\!+\!\alpha\!+\!\beta}\left[ \int_0^\infty \!\!\! F_UdF_I^{\text{d}} \!+\! \alpha\int_0^\infty \!\!\!F_UdF_I^{\alpha} \!+\! \beta\int_0^\infty \!\!\! F_UdF_I^{\beta}\right] \!-\! 1.
\end{equation}
From this, we obtain \eqref{eq:lotto_payoff1}. The other cases $\frac{\lambda_I}{\lambda_U} \in (\frac{1}{2},1)$ and $\frac{\lambda_I}{\lambda_U} \in (\frac{1}{2},\frac{1}{3})$ correspond to the budget ranges $\gamma \in [\frac{1}{3},\frac{2}{3})$ and $\gamma \in [\frac{2}{3},1)$, respectively.

For completeness, we present the marginals for the other two cases. When $\gamma \in [\frac{1}{3},\frac{2}{3})$, the unique solution of the multipliers are $\lambda_I \!=\! \frac{1}{X_U(1\!+\!\alpha\!+\!\beta)}\left[ \frac{1-\alpha}{9\gamma^2} + \alpha \right]$ and  $\lambda_U \!=\! 3\gamma \lambda_I$.
Denote the intervals {\small$I_1=\left[0, \frac{2 c}{3}(\frac{\alpha}{\lambda_I} {-} \frac{\alpha}{\lambda_U})\right]$} and {\small$I_2=\left[\! \frac{2 c}{3}(\frac{\alpha}{\lambda_I} \! - \! \frac{\alpha}{\lambda_U}), \frac{2c}{3}(\frac{\alpha}{\lambda_I} {+} \frac{1-\alpha}{\lambda_U}) \!\right]$}. Then the equilibrium marginals are
 \begin{equation} 
        F_U(x)=\begin{cases}
        		\frac{3\lambda_I}{2\alpha c}x & x \in I_1 \\
    \frac{3\lambda_I}{2c}x \!+\! (1\!-\! \frac{\lambda_I}{\lambda_U})(1\!-\!\alpha) &x \in I_2
        	    \end{cases}
        	\end{equation}
	\begin{equation}
		\begin{array}{ll}
		F_I^{\text{d}}(x) = \frac{3\lambda_U}{2c}x - \alpha\!\left(\frac{\lambda_U}{\lambda_I} - 1\right) & x \in I_2 \\
		F_I^\alpha(x) = \frac{3\lambda_U}{2\alpha c} x + 2 - \frac{\lambda_U}{\lambda_I}  & x \in I_1 \\
		F_I^\beta(x) = 1, & x \geq 0
	  \end{array}
	\end{equation}
	When $\gamma \in [\frac{2}{3},1)$, the unique solution of the multipliers are $\lambda_I \!=\! \frac{c}{X_U}\left[ \beta + \frac{1}{9\gamma^2}(1 + 3\alpha - 4\beta) \right]$ and $\lambda_U \!=\! 3\gamma\lambda_I$.
Denote the intervals {\small$I_1=\left[0, \frac{2 c}{3}(\frac{\beta}{\lambda_I} {-} \frac{2\beta}{\lambda_U})\right]$}, {\small $I_2 = \left[ \frac{2 c}{3}\left(\frac{\beta}{\lambda_I} {-} \frac{2\beta}{\lambda_U}\right), \frac{2 c}{3}\left(\frac{\beta}{\lambda_I} {+} \frac{\alpha - 2\beta}{\lambda_U}\right) \right]$}, and {\small$I_3 = \left[ \frac{2 c}{3}\left(\frac{\beta}{\lambda_I} {+} \frac{\alpha - 2\beta}{\lambda_U}\right), \frac{2 c}{3}\left(\frac{\beta}{\lambda_I} {+} \frac{\alpha - 2\beta + 1}{\lambda_U}\right) \right]$}. Then the equilibrium marginals are calculated to be 
	\begin{equation}
		F_U(x) = \begin{cases}
    \frac{3\lambda_I}{2\beta c}x & x \in I_1 \\
    \frac{3\lambda_I}{2\alpha c}x + \left( 1 {-} 2\frac{\lambda_I}{\lambda_U}\right) \left(1 {-} \frac{\beta}{\alpha}\right) & x \in I_2 \\
    \frac{3\lambda_I}{2c}x + (1-\beta) + (2\beta - \alpha - 1) \frac{\lambda_I}{\lambda_U} &x \in I_3
    \end{cases}
	\end{equation}
	\begin{equation}
		\begin{array}{ll}
				F_I^{\text{d}}(x) = \frac{3\lambda_U}{2c}x - \beta\left(\frac{\lambda_U}{\lambda_I} - 2\right) - \alpha & x \in I_3 \\
		F_I^\alpha(x) = \frac{3\lambda_U}{2\alpha c} x - \frac{\beta}{\alpha}\left( \frac{\lambda_U}{\lambda_I} - 2\right) & x \in I_2 \\
		    F_I^\beta(x) = \frac{3\lambda_U}{2\beta c} x + 3 - \frac{\lambda_U}{\lambda_I}  & x \in I_1
		 \end{array}
	\end{equation}
\end{proof}


\section{Conclusion}\label{sec:conclusion}
In this paper, we extended the Colonel Blotto and General Lotto games to a setting where players have asymmetric information about the valuations of the battlefields. We focused on the case when one player is completely informed about the valuations but has fewer resources to allocate, and the other player is uninformed. Our analysis on the two battlefield case in the Colonel Blotto game shows  an informed player still cannot defeat its opponent in a mixed-strategy Nash equilibrium. We find a three battlefield scenario presents enough complexity such that the informed player in the General Lotto game can attain the advantage for certain parameters. 

A direction of future research involves generalizing the connection of the all-pay auctions with asymmetric information to the General Lotto game. This will allow us to investigate General Lotto games where the players hold arbitrary information structures.

\appendix
\begin{proof}[Proof of Theorem \ref{blotto_phalf}]
Let $d := X_U - X_I$ and $q := \lfloor \frac{X_U}{X_U - X_I} \rfloor$ so that $X_U = qd + r$, where $0\leq r < d$. Denote a delta mass function centered at $y \in \mbb{R}$ by $\delta_y$. Define $c:=\vbar/\vubar$. We prove the Theorem by proposing a set of mixed strategy distributions $F_I^* = \{F_I^*(t_1), F_I^*(t_2)\}$, and $F_U^*$ each satisfying \eqref{eq:blotto_constraint}, and showing the strategy $F_z^*$ is a best-response to $F_{-z}^*$, $z=I,U$. For brevity, we prove the case when $q$ is odd, as the even case provides similar mixed strategies and follows similar arguments. Let $e \in (r,d)$, and consider the strategies
\begin{equation}\label{eq:FU_blotto}
	\begin{aligned}
		F_U^* = \frac{1}{s_A}&\Bigg( \sum_{k=1}^{\frac{q-1}{2}} c^{\frac{q+1}{2} - k} \delta_{e + (k-1)d} + \delta_{e + \frac{q-1}{2}d} \Bigg. \\
		&+ \Bigg. \sum_{k=\frac{q+1}{2}+1}^{q} c^{k - \frac{q+1}{2}} \delta_{e + (k-1)d} \Bigg)
	\end{aligned}
\end{equation}
\vspace{-5mm}
\begin{equation}\label{eq:FI_blotto}
		\begin{aligned}
			F_I^*(t_1) &=  \frac{1}{s_B} \Bigg( \frac{\underline{v}c^{\frac{q-1}{2}}}{\bar{v} + \underline{v}} \delta_{\frac{q-1}{2}d} + \sum_{k=\frac{q-1}{2} + 1}^{q-1} c^{q-1-k} \delta_{kd} \Bigg) \vspace{-10mm} \\
			F_I^*(t_2) &= \frac{1}{s_B} \Bigg( \sum_{k=0}^{\frac{q-1}{2}-1} c^k \delta_{kd} + \frac{\underline{v}c^{\frac{q-1}{2}}}{\bar{v} + \underline{v}} \delta_{\frac{q-1}{2}d} \Bigg) \\
		\end{aligned}
	\end{equation}
	where \vspace{-3mm}
	\begin{equation}
		s_A :=1 + 2\sum_{k=1}^{\frac{q-1}{2}} c^k, \ s_B := \frac{\underline{v}c^{\frac{q-1}{2}}}{\bar{v} + \underline{v}} + \sum_{k=0}^{\frac{q-1}{2}-1} c^k
	\end{equation}
	are normalizing factors. Before proceeding, we make a few remarks. These strategies are similar in nature to the strategies provided in Gross \& Wagner \cite{Gross_1950}. There, the authors show that a strategy composed of equally spaced delta functions with geometrically decreasing weights equalizes the payoff of the other player, and vice versa. In a similar fashion, the strategies \eqref{eq:FU_blotto} and \eqref{eq:FI_blotto} equalize the ex-ante payoffs in certain intervals of the players' allocation space. Any allocation in these intervals give a best-response to the other player's equilibrium strategy. 
	
	Let $x_U \in [0,X_U]$ be an allocation to battlefield 1, leaving $X_U-x_U$ to battlefield 2. 	The payoff $u_U(x_U,F_I^*(t_1),\omega_1)$ \eqref{eq:lotto_payoff} of any allocation $x_U$ against $F_I^*(t_1)$ in battlefield set 1 is 
	\begin{equation}{\small
		\begin{aligned}
			&\frac{1}{s_B}\left[\! \sum_{k=\frac{q-1}{2} \!+\! 1}^{q-1} c^{q\!-\!1\!-\!k}\left( \bar{v}\text{sgn}(x_U \!-\! kd) \!+\! \vubar\text{sgn}((k\!+\!1)d \!-\! x_U)\right) \right]  \\
			&+ \frac{1}{s_B}\left[\frac{\vubar c^{\frac{q-1}{2}}}{\bar{v} + \vubar}\! \left( \! \bar{v}\text{sgn}\left(x_U \!-\! \frac{q\!-\!1}{2}d\right) \!+\! \vubar \text{sgn}\left(\frac{q\!+\!1}{2}d \!-\! x_U\! \right) \right) \!\right]
		\end{aligned}}
	\end{equation}
	For notational purposes, let $H_U(\vbar,\vubar,x_U)$ denote the above quantity. Then against $F_I^*(t_2)$, $u_U(x_U,F_I^*(t_2),\omega_2) = H_U(\vubar,\vbar,x_U)$ in battlefield set 2.
	Given $p=\frac{1}{2}$, the ex-ante utility is $\pi_U(x_U,F_I^*) = \frac{1}{2}u_U(x_U,F_I^*(t_1),\omega_1) + \frac{1}{2}u_U(x_U,F_I^*(t_2),\omega_2)$. After some algebra, we arrive at 
	 \begin{equation}
	 	\pi_U(x_U,F_I^*) = 
		\begin{cases} 
			\frac{\vbar}{s_B} &\text{if } x_U \in (0,qd \ ] \\
			0 &\text{if } x_U \in (qd,X_U]
		\end{cases}
	 \end{equation}
	 Any mixed strategy $F_U$ with support on the interval $(0,qd)$ is a best-response to $F_I^*$, and hence $F_U^* \in \text{BR}_U(F_I^*)$. Now, any pure allocation $x_I^1 \in [0,X_I]$ of player $I$ against $F_U^*$ in type $t_1$ gives the ex-interim utility $\pi_I(x_I^1,F_U^* | t_1)$ \eqref{eq:interim_I}, as
	 \begin{equation}{\small
	 	\begin{aligned}
			&\!\!\frac{-1}{s_A}\!\left[ \sum_{k=1}^{\frac{q\!-\!1}{2}} c^{\frac{q+1}{2} \!-\! k}\!\! \left( \!\bar{v}\text{sgn}(e \!+\! (k\!-\!1)d \!-\! x_I^1) \!-\! \underline{v}\text{sgn}(e \!+\! (k\!-\!2)d \!-\! x_I^1) \right)\! \right. \\
			&+ \left( \!\bar{v}\text{sgn}\left(e \!+\! \frac{q-1}{2}d \!-\! x_I^1\right) - \underline{v}\text{sgn}\left(e \!+\! \left(\frac{q\!-\!1}{2} \!-\! 1\right)d \!-\! x_I^1\right) \!\right) \\
			&\!\!\! + \!\!\!\left. \!\!\sum_{k=\frac{q+1}{2}+1}^{q}\!\!\!\!\! c^{k - \frac{q+1}{2}} \left( \bar{v}\text{sgn}(e \!+\! (k\!-\!1)d \!-\! x_I^1) \!-\! \underline{v}\text{sgn}(e \!+\! (k\!-\!2)d \!-\! x_I^1) \right)\! \right]\\
		\end{aligned}}
	\end{equation}
	For notational purposes, let $H_I(\vbar,\vubar,x_I^1)$ denote the above quantity. 
	One can list all possible values this takes as a function of $x_I^1$ (not shown due to space constraint). This is increasing in $x_I^1$, and attains the maximum value when {\small$x_I^1 \in (e+\left(\frac{q-1}{2} - 1 \right)d, X_I)$}, giving  $\max_{x_I^1} \pi_I(x_I^1,F_U^* | t_1) =  -\frac{\underline{v}}{s_A}(1+c)$.

	Any pure allocation $x_I^2 \in [0,X_I]$ of player $I$ against $F_U^*$ in type $t_2$ gives ex-interim utility $\pi_I(x_I^2,F_U^* | t_2) = H_I(\vubar,\vbar,x_I^2)$. Through a similar analysis, we find the maximal value $-\frac{\underline{v}}{s_A}(1+c)$ is attained for {\small$x_I^2 \in (0,e + \frac{q-1}{2}d)$}. Hence, any mixed strategy $F_I(t_1)$ with support on {\small$(e+\left(\frac{q-1}{2} - 1 \right)d, X_I)$} and $F_I(t_2)$ with support on $(0,e + \frac{q-1}{2}d)$ is a best-response to $F_U^*$. Consequently, $F_I^* \in \text{BR}_I(F_U^*)$. The quantities $\frac{\vubar}{s_B}$ and $\frac{\underline{v}}{s_A}(1+c)$ coincide, giving the value of the game.
	
	\end{proof}






\bibliographystyle{IEEEtran}
\bibliography{sources}

\end{document}